\newcommand\nc\newcommand
\nc\bfa{{\boldsymbol a}}\nc\bfA{{\boldsymbol A}}\nc\cA{{\mathscr A}}
\nc\bfb{{\boldsymbol b}}\nc\bfB{{\boldsymbol B}}\nc\cB{{\mathscr B}}
\nc\bfc{{\boldsymbol c}}\nc\bfC{{\boldsymbol C}}\nc\cC{{\mathscr C}}
\nc\bfd{{\boldsymbol d}}\nc\bfD{{\boldsymbol D}}\nc\cD{{\mathscr D}}
\nc\bfe{{\boldsymbol e}}\nc\bfE{{\boldsymbol E}}\nc\cE{{\mathscr E}}
\nc\bff{{\boldsymbol f}}\nc\bfF{{\boldsymbol F}}\nc\cF{{\mathscr F}}
\nc\bfg{{\boldsymbol g}}\nc\bfG{{\boldsymbol G}}\nc\cG{{\mathscr G}}
\nc\bfh{{\boldsymbol h}}\nc\bfH{{\boldsymbol H}}\nc\cH{{\mathscr H}}
\nc\bfi{{\boldsymbol i}}\nc\bfI{{\boldsymbol I}}\nc\cI{{\mathcal I}}
\nc\bfj{{\boldsymbol j}}\nc\bfJ{{\boldsymbol J}}\nc\cJ{{\mathscr J}}
\nc\bfk{{\boldsymbol k}}\nc\bfK{{\boldsymbol K}}\nc\cK{{\mathscr K}}
\nc\bfl{{\boldsymbol l}}\nc\bfL{{\boldsymbol L}}\nc\cL{{\mathscr L}}
\nc\bfm{{\boldsymbol m}}\nc\bfM{{\boldsymbol M}}\nc{\cM}{{\mathscr M}}
\nc\bfn{{\boldsymbol n}}\nc\bfN{{\boldsymbol N}}\nc\cN{{\mathscr N}}
\nc\bfo{{\boldsymbol o}}\nc\bfO{{\boldsymbol O}}\nc\cO{{\mathscr O}}
\nc\bfp{{\boldsymbol p}}\nc\bfP{{\boldsymbol P}}\nc\cP{{\mathscr P}}\nc\eP{{\EuScriptP}}\nc\fP{{\mathfrak P}}
\nc\bfq{{\boldsymbol q}}\nc\bfQ{{\boldsymbol Q}}\nc\cQ{{\mathscr Q}}
\nc\bfr{{\boldsymbol r}}\nc\bfR{{\boldsymbol R}}\nc\cR{{\mathscr R}}
\nc\bfs{{\boldsymbol s}}\nc\bfS{{\boldsymbol S}}\nc\cS{{\mathscr S}}
\nc\bft{{\boldsymbol t}}\nc\bfT{{\boldsymbol T}}\nc\cT{{\mathscr T}}
\nc\bfu{{\boldsymbol u}}\nc\bfU{{\boldsymbol U}}\nc\cU{{\mathscr U}}
\nc\bfv{{\boldsymbol v}}\nc\bfV{{\boldsymbol V}}\nc\cV{{\mathscr V}}
\nc\bfw{{\boldsymbol w}}\nc\bfW{{\boldsymbol W}}\nc\cW{{\mathscr W}}
\nc\bfx{{\boldsymbol x}}\nc\bfX{{\boldsymbol X}}\nc\cX{{\mathscr X}}
\nc\bfy{{\boldsymbol y}}\nc\bfY{{\boldsymbol Y}}\nc\cY{{\mathscr Y}}
\nc\bfz{{\boldsymbol z}}\nc\bfZ{{\boldsymbol Z}}\nc\cZ{{\mathscr Z}}
\newcommand{\tr}[1]{\mathrm{tr}_{\mathbb{K}/F_{#1}}}
\newcommand{\zz}[1]{{\color{black}#1}}
\newtheorem{theorem}{Theorem}[section]
\newtheorem{lemma}[theorem]{Lemma}
\newtheorem{proposition}[theorem]{Proposition}
\newtheorem{definition}[theorem]{Definition}
\newtheorem{construction}{Construction}[section]
\theoremstyle{remark}
\DeclareMathOperator{\rk}{rank}
\begin{document}
	\title{Explicit constructions of MSR codes for clustered distributed storage: The rack-aware storage model}
	
\author{\IEEEauthorblockN{Zitan Chen} \hspace*{1in}
\and \IEEEauthorblockN{Alexander Barg}}
\maketitle	

{\renewcommand{\thefootnote}{}\footnotetext{

\vspace{-.2in}
 
\noindent\rule{1.5in}{.4pt}

{The results of this paper were presented in part at the Allerton Conference on Communication, Control and Computing, Monticello, IL, Oct. 2018. 

Z. Chen is with Department of ECE and ISR, University of Maryland, College Park, MD 20742. Email: chenztan@gmail.com
His research was supported by NSF grant CCF1618603.

A. Barg is with Dept. of ECE and ISR, University of Maryland, College Park, MD 20742 and also with IITP, Russian Academy of Sciences, 127051 Moscow, Russia. Email: abarg@umd.edu. His research was supported by NSF grants CCF1618603 and CCF1814487.}}
}
\renewcommand{\thefootnote}{\arabic{footnote}}
\setcounter{footnote}{0}

\begin{abstract} The paper is devoted to the problem of erasure coding in distributed storage. We consider a model of storage that assumes that nodes are organized into equally sized groups, called racks, that within each group the nodes can communicate freely
without taxing the system bandwidth, and that the only information transmission that counts is the one between the racks. This assumption implies that the nodes within each of the racks can collaborate before providing information to the failed node. The main emphasis of the paper is on code construction for this storage model. We present an explicit family of MDS array codes that support recovery of a single failed node from any number of helper racks using the minimum possible amount of inter-rack communication
(such codes are said to provide optimal repair). The codes are constructed over finite fields of size
comparable to the code length.

We also derive a bound on the number of symbols accessed at helper nodes for the purposes of repair, and construct a code family that
approaches this bound, while still maintaining the optimal repair property.

Finally, we present a construction of scalar Reed-Solomon codes that support optimal repair for the rack-oriented storage model.

\end{abstract}

	\section{Introduction}
Erasure codes increase reliability and efficiency of distributed storage by supporting the recovery of the data on failed nodes
under the restriction of low repair bandwidth, i.e., limited amount of information downloaded from other nodes for the purposes of the repair. 
This problem was initially introduced in the well-known paper \cite{dimakis2010network} which cast the capacity problem of distributed storage as
a network coding problem where the necessary conditions for the repair of failed nodes were derived 
by considering the information flow in the network that occurred in the course of repair. These conditions imply a bound on the minimum number
of symbols required for repair of a single failed node, which is known as the cut-set bound on the repair bandwidth. Paper 
\cite{dimakis2010network} further considered a variety of data coding schemes that optimize either storage or repair bandwidth, as well
as the tradeoff between these two quantities. In this paper we limit ourselves to minimum-storage regenerating (MSR) codes, or, equivalently, to
MDS codes with optimal repair. We further restrict our attention to the task of exact repair as opposed to mode general functional repair 
\cite{dimakis2010network}.

Initially the repair problem was formulated for the so-called centralized repair model which assumes that the failed nodes are repaired by a single data collector that
receives information from the helper nodes and performs the recovery within a single location, having full access
to all the downloaded information and the intermediate results of the calculations \cite{Cadambe13,Ye16,rawat2016centralized}. Another well-known
model assumes cooperative repair, when the failed nodes are restored at different physical locations, and the information downloaded to each of them as well as the exchange of intermediate results between them are counted toward the overall repair bandwidth \cite{Kermarrec11,Shum13,Li14,Ye19a}.

The problems of centralized and cooperative repair have been addressed in multiple recent papers, and there are
explicit constructions of optimal-repair regenerating codes that cover the entire range of admissible parameters, require small-size
ground alphabet compared to the length $n$ of the encoding block, and attain the smallest possible repair bandwidth
\cite{Rashmi11,Tamo13},\cite{Ye16},\cite{ye2017explicit,Sasid16,clay18},\cite{LiTangTian18} (more references are given in a recent survey \cite{Balaji18}).
The availability of optimal constructions has motivated a shift of attention toward studying data recovery not only under communication,
but also {\em connectivity constraints}, in other words, storage models in which communication cost between nodes differs depending
on their location in the storage cluster. One of the simple extensions from the basic setting of homogeneous storage 
suggests that the nodes are joined into several groups
(clusters), and repair of a node can be based on information from both the nodes within its own group and from nodes in the other groups.
This permits to differentiate between communication within the cluster and the inter-cluster downloads, and the natural assumption
is that the former is easier (contributes less to the repair bandwidth) than the latter.

Erasure coding for clustered architectures was introduced several years ago and affords several variations. One of the first questions analyzed 
for heterogeneous storage models was related to repair under the condition that the system contains a group of nodes, downloading information from which contributes more to the repair bandwidth than downloading the same 
amount of information from the other nodes \cite{Akhlagi10}. Later works \cite{Gaston2013,Pernas2013} observed that a more realistic version of 
non-homogeneous storage should assume that the cost of downloading information depends on the relative location of the failed node in the system. 
In this case, downloading information from the group that contains the failed node (also called the {\em host group}) contributes less to the 
cost than inter-cluster downloads. The authors of \cite{Gaston2013,Pernas2013} have assumed that the storage is formed of two 
clusters and derived versions of the cut-set bound for the minimum repair bandwidth. The two-cluster model was further developed in recent papers 
\cite{Sohn18a,Sohn18} which assumed that the encoded data is placed in a number of clusters (generally more than two), and derived a cut-set type 
bound on the repair bandwidth for this case. Moreover, \cite{Sohn18} showed existence of optimal-repair codes for their model, and 
\cite{Sohn18a} gave an explicit construction of MDS array codes of rate 1/2 for storage with two clusters. We also mention 
\cite{Sahraei2017a,Ye16,PAM18,Tebbi2014} which discuss other variations of clustered storage architectures and are less related to our work.

\subsection{Rack-aware storage} The model that we address in this work assumes that $k$ data blocks are encoded into a codeword of length $n={\bar n} u$ and stored across $n$ nodes. The nodes are organized into $\bar n$ groups, also called racks.  Suppose that a node has failed and call the
rack that contains it the {\em host rack}. To perform the repair, the system downloads information from the nodes in the host rack (called below {\em local nodes}), as well as information from the other racks.
The rack-oriented storage model is distinguished from the other clustered storage architectures in that the 
information from nodes that share the same rack, can be processed before communicating it to the failed node. Communication within the racks, including the host rack, does not incur any cost toward the repair bandwidth. The main benefit of rack-aware coding is related to
reducing the bandwidth required for repair compared to coding for homogeneous storage.

This model was introduced in \cite{hu2016double,hu2017optimal}. 
Specifically, the authors of \cite{hu2016double} derived a version of the cut-set bound
of \cite{dimakis2010network} adapted for this case and showed existence of minimum-storage codes with optimal repair for the rack model. A more expanded study of codes for this model, both for the minimum-storage and minimum-bandwidth 
scenarios, was undertaken in a recent paper \cite{hou2018rack}, which showed existence of codes with optimal repair
bandwidth for a wide range of parameters. At the same time, there are very few explicit constructions of MSR codes for this model 
known in the literature. We mention \cite{hu2017optimal} which presented such codes for 3 racks and for the case when the number
of parity symbols of the code $r:=n-k=\bar n.$ 

\subsection{Main results}
In this paper we present constructions of minimum-storage regenerating codes for the rack-aware storage model that have optimal repair
bandwidth and cover all admissible parameters, such as the code rate $k/n$, the size and number of the racks. 
The only restriction that we assume is the natural condition that the racks are of equal size $u$ and that the codeword is written on $\bar n$
racks such that $u$ symbols of the field are placed on each of them.
This assumption is also consistent with the literature \cite{hu2016double,hou2018rack}.

We present two families of MDS array codes that support optimal repair in the rack model. The first family gives 
an explicit construction of optimal-bandwidth codes for repairing a {\em single node} from the nodes located in $\bar d$ helper racks for any $
\lfloor k/u\rfloor\le\bar d\le\bar n-1$. The underlying
finite field of our construction is of size at most $n^2/u$ where $u$ is the size of the rack, and the node size (sub-packetization) equals $l\approx 
(\bar d-\frac ku)^{n/u}.$ The construction is phrased in terms of the parity-check equations of the code, as in \cite{Ye16,ye2017explicit}, and
relies on the multiplicative structure of the field to account for the rack model considered here.

The second code family constructed in this paper, in addition to optimal repair, addresses the question of reducing the number of symbols accessed
on each of the helper racks. The code construction is presented in two steps. First, we present a new family of optimal-access codes for the standard repair problem (homogeneous storage), constructing codes
 with arbitrary repair degree $d, k\le d\le n-1$ over a field $F$ of size at least $d-k+1.$ 
These parameters
are similar to optimal-access codes constructed in \cite{Ye16}, and in fact require a slightly larger field $F$. At the same time, 
the new construction can be modified for the rack model, resulting in codes with low access. 

We also present a family of (scalar) Reed-Solomon codes that can be optimally repaired in the rack model. The construction is a modified
version of the RS code family constructed in \cite{Tamo18} for the case of homogeneous storage.

Apart from the code constructions, we examine the structure of codes with optimal repair or optimal access for the rack model. Because of intra-rack processing, the definition 
of optimal access is not as explicit as in the homogeneous case. We prove a lower bound on the number of accessed symbols for codes that
support optimal repair. At the same time, the codes that we construct fall short of attaining this bound, and it is not clear what is the correct 
value of this quantity.

Finally, we derive a lower bound on the node size for optimal-repair codes in the rack model, modifying for this purpose the approach of the recent work \cite{balaji2017tight}, where a similar bound was proved for the homogeneous case. 

\section{Problem statement and structural lemmas}
Assume that the data file of size $M$ is divided into $k$ blocks and encoded
using an array code $\cC$ of length $n$ over some finite field $F$. Each symbol of the codeword is represented by an $l$-dimensional 
vector over $F$ and is placed on a separate storage node. 
We assume that the code is MDS, i.e., the entire codeword
can be recovered from any $k$ of its coordinates (from the encoding stored on any $k$ out of the $n$ nodes).
According to the cut-set bound of \cite{dimakis2010network}, the amount of information required for repair of
a single node from $d$ helper nodes satisfies the inequality
   \begin{equation}\label{eq:cutset}
     \beta(n,k)\ge \frac{d l}{d-k+1}, 
        \end{equation}
where $k\le d\le n-1.$ 

Suppose that information is encoded with an MDS array code $\cC$ of length $n=\bar n u$ over a finite field $F.$ If the size of the code is $q^{kl},$ we refer to it as a $\cC(n,k,l)$ code. The set of 
nodes $[n]=\{1,2,\dots,n\}$ is partitioned into $\bar n$ subsets (racks) of size $u$ each. Accordingly, the coordinates of the codeword $c\in\cC$ 
are partitioned into segments of length $u$, and we label them as 
$c_t,t=1,\dots,n$, where
$t=(m-1)u+j, 1\le m\le \bar n,\; 1\le j\le u$. We do not distinguish between the nodes and the coordinates of 
the codeword, and refer to both of them as nodes. Each node is an element in $F^l,$ and when needed, we denote its entries as 
$c_{t,j},j=1,\dots,l.$

Denote by $\cR\subset\{1,\dots,\bar n\}$ the set of $\bar d$ helper racks and let $m^\ast$ be the index of the host rack.
To repair the failed node, information is generated in the helper racks and is combined with the contents of the local nodes to perform the repair. This is modeled by computing a linear function of the contents of the nodes within each helper rack (the function depends on the contents of all the nodes in the rack, and can in principle also depend on the rack index), and sending this information to rack $m^\ast.$

\begin{definition}[{\sc Repair scheme}]\label{def:repair} Let $\cC(n,k,l)$ be an array code. 
Suppose that node $c_{m^\ast u+j^\ast}$ is erased (has failed). To recover the lost data, we rely on the values of the symbols in coordinates
$c_{iu+j},$ where $i\in \cR$  and $j=1,\dots,u.$
A {\em repair scheme} $\cS$ with repair degree $\bar d\le \bar n-1$ is formed of $\bar d$ functions $f_{i}:F^{ul}\to F^{\beta_{i}},t=1,\dots,\bar d$ and a function 
$g:F^{\sum_{i\in \cR}^{\bar d} \beta_{i}}\times F^{(u-1)l}\to F^l.$ For a given $i\in \cR$ the function $f_{i}$ maps $c^{(i)}$ (the nodes in rack $i$) to some $\beta_{i}$ symbols of $F.$ The function $g$ accepts these symbols together
with the available nodes in the {host rack} as arguments, and returns the value of the failed node:
   $$
   g(\{f_{i}(c_{i u+j},1\le j\le u),i\in\cR\},\{c_{m^\ast u+j}, j\in\{1,\dots,u\}\backslash \{j^\ast\}\})=c_{m^\ast u+j^\ast}.
   $$ 
In general the function $f_{i},i\in\cR$ depends on $i,m^\ast$ and $j^\ast$, and the function $g$ depends on $\cR,m^\ast,j^\ast.$

The quantity $\beta(\cR,m^\ast,j^\ast)=\sum_{i\in\cR}\beta_{i}$
is called the repair bandwidth of the node $c_{m^\ast u+j^\ast}$ from the helper racks in $\cR$ and from the available nodes in the host rack $m^\ast.$
\end{definition}

The repair scheme can be defined in a more general way: for instance, each of the functions $f_{i_t}$ that form the information
downloaded by the failed node could depend on the entire set $\cR,$ and the dependence of the function $g$ on $\cR$ could be not just
through the downloaded information. At the same time, all our results as well as all the results in the earlier literature are well described by this definition, which therefore suffices for our purposes. If the functions $f_{i_t},g$ are $F$-linear, the repair scheme itself is called {\em linear}. Only such schemes will be considered below.

Let
   $$
   \beta(n,k,u):=\min_{\cC\subset F^{nl}}\max_{\cR,m^\ast,j^\ast} \beta(\cR,m^\ast,j^\ast)
   $$
where the minimum is taken over all $(n,M=q^{kl})$ MDS array codes and the maximum over the index of the host rack, the failed node in the rack, and the choice of the set of the helper racks $\cR.$  To rule out the trivial case, we assume throughout that $k\ge u.$

\subsection{Optimal repair}
Suppose that $k=\bar k u+v,$ where $0\le v\le {u}-1.$	
A necessary condition for successful repair of a single node is given by a version of the cut-set bound \cite{hu2016double}, \cite{hou2018rack} which states that
for any $(n,k,l)$ MDS array code, the (inter-rack) repair bandwidth is at least 
	\begin{align}
		\beta(n,k,u)\ge \frac{{\bar d}l}{\bar d-\bar k+1} 
		 \label{eq:rack-bound}
	\end{align}
The code that attains this bound with equality is said to have the optimal repair property.	
	
The arguments below are based on the following obvious (and well-known) observation.
\begin{lemma}\label{lemma:MDS} Let $\cC(n,k,l)$ be an MDS array code. Suppose that a failed node is repaired using a set $\cI,|\cI|=d$ of helper nodes. The number of symbols  of $F$ downloaded for the repair task from any subset $\cI'\subset \cI$ of size $|\cI'|=d-k+1$ is at least $l.$ 
\end{lemma}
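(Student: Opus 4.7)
The plan is a standard MDS-based dimension-counting argument. Partition $\cI$ as $\cI=\cI'\sqcup\cI''$, where $|\cI'|=d-k+1$ and so $|\cI''|=k-1$. I would first fix the contents of the $k-1$ helper nodes in $\cI''$ to arbitrary values. By the MDS property, any $k$ coordinates of the codeword determine all the others; in particular, the $k-1$ fixed coordinates in $\cI''$ together with the failed node form such a set, so once the $\cI''$ values are fixed, each of the $q^l$ possible values of the failed node extends uniquely to a codeword of $\cC$. Consequently, the symbols stored on the $d-k+1$ nodes in $\cI'$ become a well-defined function of the failed node's $l$ coordinates.

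Next, I would invoke the definition of a repair scheme. The reconstruction of the failed node must succeed for every codeword, hence in particular for every value of the failed node while the $\cI''$ coordinates are held at the fixed values above. Since the contents of $\cI''$ are held constant, the only part of the downloaded information that varies with the failed node is the portion generated from $\cI'$. For the repair function $g$ to output the correct value in all $q^l$ cases, the map sending the failed node's value in $F^l$ to the symbols downloaded from $\cI'$ must be injective.

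Finally, an injection from $F^l$ into $F^{\beta'}$, where $\beta'$ is the total number of symbols downloaded from $\cI'$, forces $\beta'\ge l$, which is exactly the claim. The argument is largely bookkeeping rather than clever; the only point that requires some care is making precise that ``downloading from $\cI'$'' in Definition~\ref{def:repair} can be viewed as a function of just the $\cI'$-coordinates (possibly after processing within each rack, but this does not change the counting), so that fixing the complementary $\cI''$ isolates the dependence of the transmitted information on the failed node. Once this is spelled out, there is no real obstacle, and the statement follows immediately.
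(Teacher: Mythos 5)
Your argument is correct and is essentially the paper's proof spelled out in detail: the paper disposes of the lemma with the single observation that, by the MDS property, the $k-1$ nodes outside $\cI'$ carry no information about the failed node, which is exactly the content of your fix-$\cI''$, vary-the-failed-node injectivity argument. Nothing further is needed.
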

To prove this it suffices to observe that, because of the MDS property, no subset of $k-1$ nodes carries any information about the value of any 
other node. 

We note that this lemma applies to the rack model (i.e., allowing processing of the information obtained from the nodes in $\cI$). It also applies
if the count of downloaded symbols is replaced by the count of symbols {\em accessed} on the helper nodes.

The next statement, called the {\em uniform download property}, is well known for the case of homogeneous
storage. Its proof for the rack-aware storage is not much different, and is given for completeness in Appendix \ref{sec:uniform}.
\begin{proposition}\label{prop:uniform}
Let $\cC$ be an MSR code and suppose that $\bar k>1.$ Let $\cR$ be the set of helper racks used to repair a single failed node.
Then $\beta_{i}={l}/{(\bar d-\bar k+1)},i\in \cR.$
\end{proposition}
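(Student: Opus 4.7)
The plan is to mimic the classical double-counting argument for the homogeneous setting, adapting the MDS observation (Lemma~\ref{lemma:MDS}) to the rack-aware model. Write $\beta_i$ for the number of $F$-symbols produced by the helper rack $i\in\cR$. Because $\cC$ is MSR we know that $\sum_{i\in\cR}\beta_i=\bar d l/(\bar d-\bar k+1)$. The goal is to show that this total is split equally across the $\bar d$ helper racks.

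\smallskip

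The first step is to establish the per-subset lower bound
\[
\sum_{i\in\cR'}\beta_i\;\ge\; l\qquad\text{for every } \cR'\subset\cR\text{ with }|\cR'|=\bar d-\bar k+1 .
\]
To prove this I would fix such an $\cR'$ and set $\cR''=\cR\setminus\cR'$, so $|\cR''|=\bar k-1$. Since the functions $f_i$ depend only on the contents of rack $i$, the repair function $g$ can be evaluated when we supply (a) the entire content $c^{(i)}$ of every rack $i\in\cR''$, (b) the $u-1$ surviving nodes of the host rack, and (c) the $\sum_{i\in\cR'}\beta_i$ downloaded symbols. Taken together, (a) and (b) amount to $(\bar k-1)u+(u-1)=\bar k u-1$ nodes, and this quantity is at most $k-1$ because $k=\bar k u+v\ge\bar k u$. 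By the MDS property invoked in Lemma~\ref{lemma:MDS}, these nodes carry no information at all about the failed node $c_{m^{\ast}u+j^{\ast}}$, so the $l$ symbols of this node must be determined entirely by (c); hence $\sum_{i\in\cR'}\beta_i\ge l$.

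\smallskip

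The second step is a counting argument. Summing the inequality above over all $\binom{\bar d}{\bar d-\bar k+1}$ subsets $\cR'\subset\cR$ of size $\bar d-\bar k+1$, each $\beta_i$ is counted exactly $\binom{\bar d-1}{\bar d-\bar k}$ times, which gives
\[
\binom{\bar d-1}{\bar d-\bar k}\cdot\frac{\bar d l}{\bar d-\bar k+1}\;\ge\;\binom{\bar d}{\bar d-\bar k+1}\, l .
\]
A direct check shows that the two sides are in fact equal, so the inequality $\sum_{i\in\cR'}\beta_i\ge l$ must be an equality for every $\cR'$ of the prescribed size. Finally, for any two indices $i\neq j$ in $\cR$, the hypothesis $\bar k>1$ guarantees that $\bar d-\bar k\le \bar d-2=|\cR\setminus\{i,j\}|$, so one can choose a set $S\subset\cR\setminus\{i,j\}$ of size $\bar d-\bar k$ and compare the two equalities $\sum_{t\in S\cup\{i\}}\beta_t=\sum_{t\in S\cup\{j\}}\beta_t=l$ to conclude $\beta_i=\beta_j$. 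Thus all $\beta_i$ share the common value $l/(\bar d-\bar k+1)$, as claimed.

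\smallskip

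The only subtle point — and where the hypothesis $\bar k>1$ is essential — is the last step: when $\bar k=1$ every $\cR'$ of size $\bar d-\bar k+1$ equals the whole of $\cR$, so the counting argument produces only the single identity $\sum_{i\in\cR}\beta_i=l$ and cannot separate individual $\beta_i$'s. Apart from this, all the work is in verifying the MDS-based lower bound of the first step, and this is a direct consequence of Lemma~\ref{lemma:MDS} applied to the free local nodes and the nodes of the complementary helper racks.
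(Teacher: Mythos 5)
Your proof is correct and follows essentially the same route as the paper: the MDS-based lower bound $\sum_{i\in\cR'}\beta_i\ge l$ for every $\bar s$-subset of helper racks, an averaging/counting argument showing all these inequalities are tight for an MSR code, and an exchange of subsets (using $\bar k>1$) to force all $\beta_i$ equal. Your final step is a slightly cleaner direct exchange than the paper's contradiction argument, but the substance is identical.
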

 We note that both the bound \eqref{eq:rack-bound} and this proposition can be generalized
to the case of $2\le h\le r$ failed nodes located on the same rack without any difficulty; for instance, the bound takes the form $\beta\ge 
\frac{h\bar d l}{\bar d-\bar k+1}.$

Next, observe that if $k$ is divisible by the rack size $u$, then any MSR code for the
standard model will be optimal for the rack model, i.e., cooperation between the nodes within the rack does
not help to reduce the repair bandwidth (this has been first observed in \cite[Thm. 4]{hou2018rack}). 
\begin{proposition} Let $k=\bar ku,$ and let $\cC$ be an MSR code of length $n=\bar n u$ with optimal repair of a single node 
for the homogeneous storage model. Then $\cC$ attains the cut-set bound \eqref{eq:rack-bound} for repair of any single node 
in the rack-aware model.
\end{proposition}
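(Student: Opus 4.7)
The plan is to exhibit an explicit rack-aware repair scheme, built directly from the given homogeneous MSR repair scheme, and then verify that its inter-rack bandwidth matches the cut-set bound \eqref{eq:rack-bound}. The key observation is that since $k = \bar k u$, the divisibility of $k$ by $u$ makes all the quantities line up cleanly.

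First I would fix a set $\cR$ of $\bar d$ helper racks and a failed node $c_{m^\ast u + j^\ast}$ in the host rack $m^\ast$. In the rack-aware model, the available information for repairing this node consists of the $\bar d u$ nodes in the helper racks together with the $u-1$ surviving local nodes of the host rack, giving a total of $d := \bar d u + u - 1 = (\bar d + 1)u - 1$ helper nodes. Since $\cC$ is an MSR code for the homogeneous model with optimal single-node repair at this helper degree $d$, there is a linear repair scheme that uses these $d$ nodes and, by Proposition \ref{prop:uniform} for the homogeneous setting (or the assumption of optimal repair itself), downloads exactly $l/(d-k+1)$ symbols of $F$ per helper.

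Next I would run this homogeneous scheme as-is inside the rack model, treating the $u-1$ surviving local nodes as ``free'' helpers (their contributions are computed inside the host rack and cost nothing toward the inter-rack bandwidth). Substituting $k = \bar k u$ gives
\begin{equation*}
d - k + 1 \;=\; (\bar d + 1)u - 1 - \bar k u + 1 \;=\; (\bar d - \bar k + 1)\,u,
\end{equation*}
so the per-helper download is $l/((\bar d - \bar k + 1)u)$. Only the $\bar d u$ helpers residing in the helper racks contribute to the inter-rack bandwidth, yielding
\begin{equation*}
\beta(\cR, m^\ast, j^\ast) \;\le\; \bar d u \cdot \frac{l}{(\bar d - \bar k + 1)u} \;=\; \frac{\bar d\,l}{\bar d - \bar k + 1},
\end{equation*}
which matches \eqref{eq:rack-bound} with equality and therefore attains the cut-set bound.

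I do not expect a real obstacle here; the argument is essentially bookkeeping. The only thing to be careful about is ensuring that the homogeneous scheme is applied with the specific helper degree $d = (\bar d + 1)u - 1$ (which is automatic when ``optimal repair'' is understood, as in the surrounding literature, to hold for all valid $d$, or at least for this value) and that the arithmetic $d - k + 1 = (\bar d - \bar k + 1)u$ genuinely requires the divisibility $k = \bar k u$ — otherwise the per-node download would not distribute evenly over the rack structure and the inter-rack sum would exceed the rack-aware cut-set bound.
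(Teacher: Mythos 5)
Your proposal is correct and follows essentially the same route as the paper: apply the homogeneous MSR scheme with helper degree $d=\bar d u+(u-1)$, use the uniform download property to split the bandwidth $\frac{dl}{d-k+1}$ evenly over helpers, discount the $u-1$ local nodes as free intra-rack traffic, and note that $d-k+1=(\bar d-\bar k+1)u$ makes the inter-rack part equal $\frac{\bar d l}{\bar d-\bar k+1}$. The caveat you raise about the homogeneous scheme being available at this specific $d$ is the same implicit assumption the paper makes.
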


{\em Proof:}
Take
an MSR code of length $n$ and assume that $v=0.$ 
Suppose that the number of helper nodes is $d$, and this includes the $u-1$ local nodes. By \eqref{eq:cutset}, the repair bandwidth
necessary equals
    $
    \frac{d }{d-k+1} l.
    $
In accordance with the model, take $d=\bar d u+(u-1)$, then
  \begin{equation}\label{eq:du}
  \frac{d }{d-k+1}l=\Big(\frac{\bar d}{\bar d-\bar k+1}+\frac{u-1}{d-k+1}\Big)l
  \end{equation}
and this achieves the bound \eqref{eq:rack-bound} if the second term is discounted (which is possible because of the uniform download property and because intra-rack communication is free).
\qed

Note that in the case of $v\ne 0,$ optimal codes for the rack model perform repair using a strictly smaller repair bandwidth than 
optimal codes for the homogeneous model. This also suggests that the number of symbols downloaded from a helper rack is strictly smaller than
the number of accessed symbols, i.e., intra-rack processing is necessary for optimal repair (this will be made rigorous once we establish Prop.~\ref{prop:optimal-access} below).

For reader's convenience, let us summarize the code parameters: We consider $(n,k,l)$ array codes used in a system where the nodes are arranged in racks of size $u$. The codes are designed to repair a single node. We further assume that $n=\bar n u, k=\bar k u+v,$ where $0< v\le u-1$, and the number of helper racks is $\bar d,$ where $\bar k\le\bar d\le\bar n-1$. We also use the notation $r=n-k,\bar r=\bar n-\bar k$ for the number of parity nodes and parity racks, respectively. Finally, to shorten the formulas we denote
  \begin{gather*}
  s=d-k+1, \quad\bar s=\bar d-\bar k+1,
  \end{gather*}
where $d$ is the total number of helper nodes accessed for repair, and $\bar d$ is the {\em repair degree}, i.e., number of helper racks (not counting the host rack).

\subsection{Optimal access} 
Some of the constructions of codes for the homogeneous case have the additional property that the information accessed on the helper nodes is the same as the information
that is downloaded by the helper node (no processing is performed before downloading). This property, also called {\em repair by transfer},
reduces the implementation overhead, and is therefore desirable in the code construction. Structure and constructions of optimal access (OA) codes 
for the homogeneous case were addressed in \cite{Tamo14,clay18,ye2017explicit} among others.

\begin{definition} Let $\cC(n=\bar n u,k,l)$ be a code that supports optimal repair of a single failed node with repair degree $\bar d.$ Suppose that each of the helper racks provides $l/\bar s$ field symbols and these symbols are generated by accessing the smallest possible number
of symbols of the nodes in the rack. In this case we say that $\cC$ has the OA property.
\end{definition}

To motivate this definition, we draw an analogy with the homogeneous case. In this case, on account of the bound \eqref{eq:cutset} and the uniform download property, the system accesses $l/s$ symbols at each of the helper nodes, and these symbols are downloaded to accomplish the repair.
As a consequence, a group of $u>1$ helper nodes provides $ul/s$ symbols.
This observation also extends to the rack-aware model in the case that $u|k.$ Indeed, in this case the number of symbols downloaded from, and accessed on, each rack equals $l/\bar s=ul/s.$ 

In the next proposition (proved in Appendix~\ref{app:optimal-access}) we derive a lower bound on the number of accessed symbols and establish the {\em uniform access condition}.
\begin{proposition}\label{prop:optimal-access} Let $\cC$ be an $(n,k,l)$ optimal-repair MDS array code for the rack model with repair degree $\bar d\ge \bar k+1$ \zz{and $u\leq k$}.
The number of symbols accessed on the helper racks for repair of a single node satisfies
   \begin{equation}\label{eq:bound-access}
   \alpha\ge \frac{\bar d u l}{s}.
   \end{equation}
Equality holds if and only if the number of symbols accessed on node $e$ satisfies $\alpha_{m,e}=l/s$ for all $m\in\cR;\,e=1,\dots,u.$
\end{proposition}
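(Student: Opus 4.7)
The plan is to apply Lemma~\ref{lemma:MDS} to $s$-subsets of the helper-rack nodes and then invoke a standard double-counting argument. Let $\cI_r$ denote the set of $\bar d u$ helper-rack nodes, indexed by pairs $(m,e)$ with $m\in\cR$ and $e\in\{1,\dots,u\}$, and write $\alpha_{m,e}$ for the number of symbols of $F$ accessed on node $(m,e)$. A quick arithmetic check using $k=\bar k u+v$ and $d=\bar d u+u-1$ shows
$$\bar d u - s = u(\bar k-1)+v,$$
which is nonnegative because the hypothesis $u\le k$ forces $\bar k\ge 1$; hence $s$-subsets of $\cI_r$ exist. Lemma~\ref{lemma:MDS}, applied to accessed (rather than downloaded) symbols, gives $\sum_{(m,e)\in T}\alpha_{m,e}\ge l$ for every such $T\subseteq\cI_r$ with $|T|=s$. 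Summing over all $\binom{\bar d u}{s}$ such subsets and noting that each node of $\cI_r$ lies in exactly $\binom{\bar d u-1}{s-1}$ of them, I obtain
$$\binom{\bar d u-1}{s-1}\,\alpha \;\ge\; \binom{\bar d u}{s}\,l,$$
which rearranges to $\alpha \ge (\bar d u/s)\,l$, as required.

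For the equality characterization, the ``if'' direction is immediate. For the ``only if'' direction, suppose $\alpha=\bar d u\, l/s$. Then each of the $\binom{\bar d u}{s}$ inequalities used above must be tight. Fix an arbitrary $(s-1)$-subset $T_0\subseteq\cI_r$ and two distinct nodes $(m,e),(m',e')\in\cI_r\setminus T_0$; applying the tight equality to $T_0\cup\{(m,e)\}$ and to $T_0\cup\{(m',e')\}$ and subtracting yields $\alpha_{m,e}=\alpha_{m',e'}$. Since the choice of $T_0$ and of the two nodes was arbitrary, all $\alpha_{m,e}$ coincide, and they must each equal $l/s$.

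The main obstacle I anticipate is the boundary case $\bar d u=s$, equivalently $\bar k=1$ and $v=0$ (so $k=u$), where $\cI_r$ admits only the single $s$-subset $\cI_r$ itself and the swap step collapses. To deal with it, I would supplement the argument by invoking Lemma~\ref{lemma:MDS} on $s$-subsets that mix helper-rack and host-rack nodes, in combination with the uniform download property of Proposition~\ref{prop:uniform} and the fact that intra-rack processing cannot increase information content; these additional constraints should still force the per-node access to be uniform at $l/s$.
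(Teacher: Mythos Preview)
Your argument is correct and is in fact a cleaner route than the paper's. Both proofs rest on the same ingredient---Lemma~\ref{lemma:MDS} applied to $s$-element subsets of helper nodes, followed by averaging---but the implementations differ. The paper restricts attention to \emph{structured} $s$-subsets, namely the union of $\bar d-\bar k$ complete helper racks together with $u-v$ nodes drawn from one additional helper rack, and then averages in three successive stages (over the partial-rack subset, over the choice of full racks, and over the distinguished rack). You instead range over \emph{all} $s$-subsets of the $\bar d u$ helper-rack nodes and average in a single step. Your symmetric averaging makes the equality analysis essentially a one-line swap, whereas the paper defers that part to a repetition of the rack-level swap argument from Proposition~\ref{prop:uniform}, which handles per-rack totals first and per-node values separately.

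Your identification of the degenerate case $\bar d u=s$ (equivalently $\bar k=1$, $v=0$) is apt; the paper's own equality argument, being routed through the Proposition~\ref{prop:uniform}-style swap, likewise needs $\bar k>1$ at the rack level, so neither approach covers that corner without an extra step. Outside of that edge case, your proof is complete and arguably preferable for its brevity.
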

As noted above, if $u|k,$ the symbols accessed on the helper nodes can be downloaded without processing, accounting for optimal repair. At the
same time, if $u\nmid k,$ and the code meets the bound \eqref{eq:bound-access}, then processing is necessary because $\bar d u l/s$
is strictly greater than the optimal bandwidth in \eqref{eq:rack-bound}.

	\subsection{A lower bound on the sub-packetization of rack-aware optimal-access MSR codes}

In this section we present a lower bound on the value of the node size in MSR codes for the rack model, which will be implicitly assumed throughout
without further mention. Similarly to \cite{Tamo14,balaji2017tight}, we limit ourselves to 
systematic codes and linear repair schemes. Let $\cC$ be an $(n=\bar{n}u,k=\bar{k}u,l)$ systematic optimal-access MSR array code over $F$.
Let $A=(A_{ij})$ be the $((n-k)l\times kl)$ encoding matrix of $\cC$; in other words, the parity symbols $c_{k+i},i=1\ldots,r=n-k$ are obtained 
from the data symbols $c_j, j=1,\dots,k$ according to the relation
	\begin{align}
		c_{k+i} = \sum_{j=1}^{k}A_{i,j}c_{j}, \label{eq:parity}
	\end{align}
where each $A_{i,j}$ is an $l \times l$ invertible matrix over $F$. Assume without loss of generality that the $k$ systematic nodes are
located on racks $1,\dots,\bar k$, called systematic racks below. Racks $\bar k+1,\dots, \bar n$ will be called parity racks. 
Let $\bfc_m = (c_{(m-1)u+1},\ldots, c_{mu})^T$ be the data vector stored in the $m$-th rack, $1\le m\le \bar{k}$, where each component is an $l$-vector over $F.$ Suppose for definiteness that the failed node is located in rack $m_1$, where $1\le m_1\le \bar{k}$. Suppose further that the set of 
$\bar{d}$ helper racks is formed of the remaining $\bar{k}-1$ systematic racks and some $\bar{s}=\bar{d}-\bar{k}+1$ parity racks.
	
We assume throughout that the repair scheme is independent of the index of the failed node in its rack.

The main result of this section is given in the following theorem, whose proof is modeled on the result of \cite{balaji2017tight}.
	\begin{theorem} \label{thm:bound} Let $\cC$ be an $(n=\bar{n}u,k=\bar{k}u,l)$ {optimal-access} MSR array code, $k\ge u$, and let $\bar d,\bar{k}\le\bar{d}\le\bar{n}-1$ be the size of the helper set $\cH$. Suppose further that there is a linear
repair scheme that supports repair of a single failed node from any $\bar d$ helper racks. 
	
	(a)  Suppose that the repair scheme depends on the choice of the helper racks as well as on the index of \zz{the host rack}.
	Then
		\begin{align}
			l \geq \min\{\bar{s}^{(\bar{n}-1)/{s}},\bar{s}^{\bar{k}-1}\}, \label{eq:bound1}
		\end{align} 
		where $\bar{s}=\bar{d}-\bar{k}+1$ and $s=\bar{s}u$. 

	(b) Suppose that the repair scheme depends on the index of the host rack but not on  the choice of the helper racks, then
		\begin{align}
			l \geq \min\{\bar{s}^{{\bar{n}}/{s}},\bar{s}^{\bar{k}-1}\}. \label{eq:bound2}
		\end{align} 
	\end{theorem}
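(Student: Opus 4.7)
The plan is to adapt the approach of Balaji and Kumar~\cite{balaji2017tight} to the rack-aware setting by defining access subspaces that live at the rack level rather than at individual nodes. Specifically, for each host rack index $m_1$ and each helper rack $m$ I would introduce a subspace $V_{m_1,m}\subseteq F^l$ of dimension $l/\bar{s}$, equal to the row space of the linear map that processes the $u$ nodes of rack $m$ into the $l/\bar{s}$ symbols downloaded during the repair of any node in rack $m_1$. By Proposition~\ref{prop:optimal-access}, each individual node within rack $m$ contributes exactly $l/s$ accessed coordinates, which lift to node-indexed subspaces of dimension $l/s$; since the repair scheme is assumed not to depend on the index of the failed node inside $m_1$, these aggregate into a single rack-indexed $V_{m_1,m}$ of dimension $l/\bar{s}$.

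Next, I would extract the algebraic constraints on the $V_{m_1,m}$ imposed by the MDS property (Lemma~\ref{lemma:MDS}) together with the parity-check relations \eqref{eq:parity}. As in the homogeneous case, for any admissible choice of $\bar{s}$ helper parity racks the downloaded information must recover the lost data, which forces rank conditions on matrices assembled from the encoding blocks $A_{i,j}$ aggregated across the $u$ nodes of each rack. These conditions translate into invariance statements of the form: the sum $\sum_{m\in\cH}V_{m_1,m}$ is preserved when one helper rack is swapped for another, where the swap is realized by a specific composition of the maps $A_{i,j}^{-1}A_{i',j'}$ restricted to the rack block.

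The core of the proof then iterates this invariance to build, for each host rack $m_1$, a chain of subspaces of $F^l$ whose dimension is multiplied by $\bar{s}$ at each step. The length of the chain is controlled by the number of distinct legal helper sets. Cycling among the remaining $\bar{k}-1$ systematic racks gives roughly $\bar{k}-1$ iterations, producing the $\bar{s}^{\bar{k}-1}$ term. Incorporating parity racks gives additional iterations, but these must be grouped in batches of $s$ rather than $\bar{s}$ because the access restriction and the associated invariance operate at the node level (of which there are $u$ per rack); this yields the $\bar{s}^{(\bar{n}-1)/s}$ factor in part (a) and $\bar{s}^{\bar{n}/s}$ in part (b). The slight improvement in (b) reflects that when the repair scheme is independent of the helper set, one more rack can be introduced into the chain since fewer distinct sets of accessed coordinates need to be distinguished.

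The main obstacle I anticipate is the reconciliation of the per-node optimal-access constraint with the per-rack invariance relations. Because the $V_{m_1,m}$ are defined as sums of node-level subspaces, chaining an invariance relation derived at one rack across a batch of $s$ parity nodes requires showing that the node-level accessed subspaces of dimension $l/s$ remain in ``general position'' within their enclosing rack subspace of dimension $l/\bar{s}$. Carrying this bookkeeping through the induction, and verifying that no rank collapses occur that would shorten the chain, is where the bulk of the technical work will lie; the division by $s$ in the exponent is precisely the trace of this node-level bookkeeping.
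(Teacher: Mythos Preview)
Your general strategy---adapt \cite{balaji2017tight}, derive interference-alignment subspace conditions from the encoding blocks, and iterate to bound intersections by $l/\bar s^{\,|\cJ|}$---matches the paper's approach and will yield the $\bar s^{\bar k-1}$ term. However, your account of where the exponent $(\bar n-1)/s$ (rather than $(\bar n-1)/\bar s$) comes from is not right, and as written your setup cannot produce it.

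You place the repair subspaces $V_{m_1,m}$ inside $F^l$. In the paper, the repair matrix $S_{\bar k+i,m_1}$ is an $\tfrac{l}{\bar s}\times ul$ matrix acting on the entire rack vector $\bfc_{\bar k+i}\in F^{ul}$, so its row space lives in $F^{ul}$, not in $F^l$. The recursion on intersections drops the dimension by a factor $\bar s$ at every step (never by $s$), giving $\dim\bigcap_{m\in\cT}\langle S_{\bar k+i,m}\rangle\le l/\bar s^{\,|\cT|}$. The factor $u$ enters only at the very last step: optimal access forces each row of every $S_{\bar k+i,m}$ to be a \emph{standard basis vector of $F^{ul}$}, so one double-counts incidences in a $ul\times(\bar n-1)$ array (rows indexed by the $ul$ coordinate vectors, columns by repair matrices). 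Each column contains $l/\bar s$ ones and each row at most $\log_{\bar s}l$ ones, giving $\tfrac{l}{\bar s}(\bar n-1)\le ul\log_{\bar s}l$, i.e.\ $l\ge \bar s^{(\bar n-1)/(u\bar s)}=\bar s^{(\bar n-1)/s}$. If you carry out the counting in $F^l$ as you propose, you would obtain $l\ge \bar s^{(\bar n-1)/\bar s}$, which is too strong and not what the theorem claims.

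So the fix is not ``batching parity racks in groups of $s$'' inside the chain; it is to keep the repair subspaces in $F^{ul}$ throughout and recognize that the extra factor $u$ in the ambient dimension is precisely what dilutes the exponent from $1/\bar s$ to $1/s$ in the final pigeonhole count. Your anticipated ``general position'' difficulty for node-level subspaces inside the rack subspace does not actually arise. The distinction between (a) and (b) in the paper is simply whether the set $\cT$ in the intersection bound may have size up to $\bar n-1$ or $\bar n$, according to whether the repair matrix $S_{\bar k+i,m}$ is defined when $m=\bar k+i$.
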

A proof of this theorem is given in the Appendix. Here let us make the following remark. The theorem is proved under the assumption that $u|k,$ in which case any optimal-access MSR code for the homogeneous storage model supports optimal repair for the rack model. The smallest possible value of sub-packetization for such codes is $\zz{l=r^{\lceil\frac {n-1} r \rceil}}$ \cite{balaji2017tight,ye2017explicit}. Thus, this theorem says that it is possible that there
exist optimal-access rack codes that have smaller node size than OA codes for homogeneous storage {\em even in the case when $k$ is a multiple of $u$.}

%

\section{Rack-aware codes with optimal repair for all parameters}
\label{sec:RackCodes}
Let $\bar{s}=\bar{d}-\bar{k}+1$ and let {$F, |F|> \bar s n$} be
a finite field. 
The code that we construct is formed as an $F$-linear array MDS code $\cC$ of length $n,$ dimension $k$, and sub-packetization $l=\bar s^{\bar n}.$
 We denote a codeword of $\cC$ by $(c_1,c_2,\dots,c_n),$
where $c_i=(c_{i,1},\dots,c_{i,l})$ for all $i=1,\dots,n$.  Suppose that {$\bar s n \,|\, (|F|-1)$} and let $\lambda\in F$ be an element of multiplicative order {$\bar s n$}.
Finally, given $j\in\{0,1,\dots l-1\},$ consider the base {${\bar s}$} expansion $j=(j_{\bar n},j_{\bar n-1},\dots,j_1)$ and let
   \begin{equation}\label{eq:jpa}
   j(p,a):=(j_{\bar n},\dots,j_{p+1},a,j_{p-1},\dots,j_1),
   \end{equation}
    where {$0\le a\le \bar s-1.$}

\vspace*{.05in}\begin{construction}\label{Construction1} 
{\rm Define an $(n,k,l=\bar s^{\bar n})$ code $\cC=\{\bfc=(c_{i,j})_{1\le i\le n;0\le j\le l-1}\}$ defined by the following set of $rl$ parity-check equations over $F$:
  \begin{gather}\label{eq:code}
  \sum_{e=1}^{\bar n}\lambda^{t((e-1)\bar {s} +j_e)}\sum_{i=1}^u\lambda^{t(i-1)\bar {s}\bar n}
    c_{(e-1)u+i,j}=0
  \end{gather}
for all $t=0,\dots,r-1; j=0,\dots,l-1.$
}
\end{construction}

\vspace*{.05in}
We will show that the code defined in \eqref{eq:code} is an MDS code that has the smallest possible repair bandwidth according to the bound \eqref{eq:rack-bound}. 
Before stating the main theorem that proves these claims let us comment on the origin as
well as the new elements in this construction. The code is formed of two levels, the algebraic one, which accounts for the
repair of a node in any {\em fixed} rack, and a stacking construction which makes the code universal (i.e., rack-independent).
The first of these two levels was originally introduced in the MSR code construction in \cite{Ye16} and then developed in \cite{ye2017explicit} to account for cooperation between the failed nodes. Here we expand on this idea by
permitting the nodes within each of the racks to cooperate before passing on the information to the repair center. 
This is made possible by exploiting the multiplicative structure of the field $F$, and represents a new idea
introduced in this work.

The second component, which accounts for the universality property, has been isolated and discussed in 
\cite[Sec.~V.C]{Ye19a} (where it was called the $\odot$ operation), and is by now standard. It is
based on the idea of representing the index $j$ of the node coordinate as an $\bar {s}$-ary number, and we do not spend much space on it here.

\begin{theorem} Let {\color{black}{$\bar k\le\bar{d}\le \bar n-1.$}} The $(n,k,l=\bar s^{\bar n})$ code $\cC$ defined by the parity-check equations \eqref{eq:code} is an MDS code that
supports optimal repair of any single node from any $\bar d$  helper racks,  under the rack-aware storage model. 
\end{theorem}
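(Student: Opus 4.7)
The argument splits into two parts: proving the MDS property, and then exhibiting an explicit repair scheme achieving the bound \eqref{eq:rack-bound}.

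To establish MDS, observe that the parity-check system \eqref{eq:code} decouples across the index $j$: for each fixed $j\in\{0,\ldots,l-1\}$, the $r$ equations indexed by $t$ involve only the $n$ scalar unknowns $c_{m,j},\ m\in[n]$. It therefore suffices to show that, for every $j$, any $r$ columns of the associated $r\times n$ matrix $\bigl(\lambda^{t\alpha_{(e,i)}(j)}\bigr)$ form an invertible submatrix, where $\alpha_{(e,i)}(j):=(e-1)\bar s + j_e + (i-1)\bar s\bar n$. This matrix is Vandermonde in the values $\lambda^{\alpha_{(e,i)}(j)}$, so MDS reduces to checking that the $n$ integers $\alpha_{(e,i)}(j)$ are pairwise distinct modulo $\bar s n$. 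I would note that $(i-1)\bar s\bar n$ is a multiple of $\bar s\bar n$ while $(e-1)\bar s + j_e$ lies in $[0,\bar s\bar n-1]$ and is the unique mixed-radix representation with ``digits'' $j_e$ (base $\bar s$) and $e-1$ (base $\bar n$); hence the pair $(e,i)$ is uniquely recovered from $\alpha_{(e,i)}(j)$, the $n$ exponents are distinct elements of $[0,\bar s n-1]$, and because $\lambda$ has exact order $\bar s n$ the Vandermonde is nonsingular.

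For the repair property, fix a failed node at position $i^*$ in rack $e^*$ and a set $\cR$ of $\bar d$ helper racks. The plan is to cast \eqref{eq:code} into a rack-aggregated form that mirrors a standard Ye-Barg MSR code. Define
\[
\tilde c_{e,j}^{(t)} := \sum_{i=1}^{u}\lambda^{t(i-1)\bar s\bar n}\,c_{(e-1)u+i,j},
\]
which is a single field element computable inside rack $e$. In these variables \eqref{eq:code} becomes $\sum_{e=1}^{\bar n}\lambda^{t[(e-1)\bar s+j_e]}\tilde c_{e,j}^{(t)}=0$, which has precisely the form of a Ye-Barg code of length $\bar n$ and dimension $\bar k$ with sub-packetization $\bar s^{\bar n}$ and node-level coefficients $\lambda_{e,a}:=\lambda^{(e-1)\bar s+a}$; the $\bar s\bar n$ such values are distinct since $\bar s\bar n\le\bar s n$, the order of $\lambda$. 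The intra-rack weights $\lambda^{t(i-1)\bar s\bar n}$, which depend on $t$, are the ingredient peculiar to the rack model.

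The repair then follows the Ye-Barg blueprint for this virtual $\bar n$-node code. I would have each helper rack $e\in\cR$ transmit, for every ``outer'' index $\tilde j$ satisfying $\tilde j_{e^*}=0$ (there are $l/\bar s$ such), a single rack-level linear combination obtained from the $u\bar s$ entries $\{c_{(e-1)u+i,j(e^*,b)}:1\le i\le u,\ 0\le b\le \bar s-1\}$, chosen to reproduce the aggregates $\tilde c_{e,\cdot}^{(t)}$ appearing in the $\bar s$ parities $t=0,\ldots,\bar s-1$ used for this sister group. The total download is then $\bar d\cdot l/\bar s$, matching \eqref{eq:rack-bound}. The $u-1$ local nodes in rack $e^*$ are used for free via the identity $\lambda^{t(i^*-1)\bar s\bar n}c_{(e^*-1)u+i^*,j}=\tilde c_{e^*,j}^{(t)}-\sum_{i\ne i^*}\lambda^{t(i-1)\bar s\bar n}c_{(e^*-1)u+i,j}$, reducing each rack-level aggregate at index $j$ to the single unknown failed-node entry. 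Collecting the $\bar s^2$ equations coming from the $\bar s$ sister indices $j(e^*,b)$ and the $\bar s$ parities $t=0,\ldots,\bar s-1$, one obtains a Vandermonde-type $\bar s\times\bar s$ system in the unknowns $c_{(e^*-1)u+i^*,j(e^*,b)}$, whose coefficient matrix is generated by the distinct values $\lambda^{(e^*-1)\bar s+(i^*-1)\bar s\bar n+b}$, $0\le b\le\bar s-1$, and is therefore invertible.

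The principal technical difficulty is choosing the helper combinations so as to simultaneously (i) fit inside the $l/\bar s$-symbol budget per rack, (ii) cancel the contributions of the $\bar n-1-\bar d$ non-helper racks, and (iii) leave an invertible system for the failed-node entries. In the homogeneous case \cite{Ye16,ye2017explicit} this is achieved by a classical cancellation along the repair sub-matrix of a Ye-Barg code; here the novelty is that a helper rack may pre-process its $u$ nodes via the weights $\lambda^{t(i-1)\bar s\bar n}$ before transmitting, and this pre-processing must be tracked simultaneously with the inter-rack weights $\lambda^{t[(e-1)\bar s+j_e]}$ inside a single Vandermonde-type verification. Once the per-rack structure is correctly set up, universality of the scheme across different host racks $e^*$ follows from the stacking/``$\odot$''-operation described in \cite[Sec.~V.C]{Ye19a}, which is by now standard.
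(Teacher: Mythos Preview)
Your MDS argument is correct and essentially identical to the paper's: both observe that the parity checks decouple in $j$ and reduce to verifying that the $n$ exponents $(e-1)\bar s+j_e+(i-1)\bar s\bar n$ are pairwise distinct in $[0,\bar s n)$.

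Your repair argument, however, has a genuine gap. You correctly identify the rack-level aggregates
\[
\tilde c_{e,j}^{(t)}=\sum_{i=1}^{u}\lambda^{t(i-1)\bar s\bar n}c_{(e-1)u+i,j}
\]
and note that the parity checks read $\sum_e\lambda^{t[(e-1)\bar s+j_e]}\tilde c_{e,j}^{(t)}=0$, but you then treat this as a Ye--Barg code of length $\bar n$. It is not one: the ``node value'' $\tilde c_{e,j}^{(t)}$ depends on the parity index $t$, so there is no fixed virtual codeword to which the Ye--Barg repair can be applied. Concretely, a helper rack cannot send ``a single rack-level linear combination'' per outer index that simultaneously reproduces the aggregates for several values of $t$, because those aggregates are genuinely different linear functionals of the rack's contents. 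This is precisely the difficulty you flag at the end but do not resolve.

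The paper's key step, which is absent from your sketch, is to restrict to the subset of parity checks with $t=wu$, $w=0,\dots,\bar r-1$ (this is admissible because $(\bar r-1)u\le r-1$). Since $\lambda$ has order $\bar s n=\bar s\bar n u$, one gets $\lambda^{t(i-1)\bar s\bar n}=(\lambda^{\bar s n})^{w(i-1)}=1$, so the aggregates collapse to the $t$-independent sums $\sum_{i}c_{(e-1)u+i,j}$. Only then does the rack-level system become a genuine $(\bar n,\bar k,\bar s^{\bar n})$ Ye--Barg code (with defining elements $\alpha^{(e-1)\bar s+j_e}$, $\alpha=\lambda^u$ of order $\bar s\bar n$), and the standard repair goes through: after summing over the sister group $j_{e^\ast}=0,\dots,\bar s-1$, one obtains an $\bar r\times\bar r$ Vandermonde system whose unknowns are the $\bar s$ host-rack sums together with the $\bar r-\bar s$ sums from non-helper racks. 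Note also that you invoke only $\bar s$ parities $t=0,\dots,\bar s-1$; when $\bar d<\bar n-1$ one has $\bar r>\bar s$ and $\bar s$ equations cannot eliminate the $\bar r-\bar s$ interference terms from the non-helper racks, so your ``$\bar s\times\bar s$ system'' is underdetermined even after the $t$-dependence issue is fixed.
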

\begin{IEEEproof}
We begin with proving the part of the claim about the repair properties of the code $\cC.$ Suppose that the index of the
rack that contains the failed node is $p\in\{1,\dots,\bar n\}.$  We have $\bar ru=r+v$ and since $1\le v\le {u}-1,$ $\zz{(\bar r-1)u<r-1}.$ 
Rewriting \eqref{eq:code}, we have:
    \begin{align}
    \lambda^{t((p-1)\bar {s}+j_p)}&\sum_{i=1}^u\lambda^{t(i-1)\bar {s}\bar n}c_{(p-1)u+i,j}\nonumber\\
    =
    -&\sum_{\begin{substack} {e=1\\e\ne p}\end{substack}}^{\bar n} \lambda^{t((e-1)\bar {s}+j_e)}\sum_{i=1}^u\lambda^{t(i-1)\bar {s}\bar n}c_{(e-1)u+i,j}
    \end{align}
    for all $t=0,\dots,r-1; j=0,\dots,l-1.$
 We will use a subset of the parity-check equations with indices $t$ of the form
$t=wu:$
    \begin{align}
    \lambda^{((p-1)\bar {s}+j_p) wu}\sum_{i=1}^nc_{(p-1)u+i,j}
    =-\sum_{e\ne p}
     \lambda^{((e-1)\bar {s}+j_e) wu}\sum_{i=1}^uc_{(e-1)u+i,j}
    \end{align}
for all $j=0,\dots,l-1;w=0,1,\dots,\bar r-1,$ where we have used the fact that $\lambda^{\bar {s} n}=1.$
Denoting $\alpha=\lambda^u$ and summing these equations on $j_p=0,1,\dots,\bar {s}-1,$ we obtain the following set of conditions:
   \begin{align}\label{eq:twj}
   \sum_{j_p=0}^{\bar {s}-1}\alpha^{((p-1)\bar {s}+j_p)w}\sum_{i=1}^u c_{(p-1)u+i,j}
     =-\sum_{e\ne p}\alpha^{((e-1)\bar {s}+j_e)w}\sum_{j_p=0}^{\bar {s}-1}\sum_{i=1}^u c_{(e-1)u+i,j}
  \end{align}
  for all $w=0,1,\dots,\bar r -1$ and all $j_{\bar n},\dots,j_{p+1},j_{p-1},\dots,j_1,$ where each of the $j$'s ranges over
  $\{0,1,\dots,{\bar s}-1\}.$
   Let $\cR=\{q_1,\ldots,q_{\bar{d}}\}$ be the set of helper racks and let $[\bar{n}]\setminus\cR=\{p,p_1,\ldots,p_{\bar{r}-\bar{s}} \}$.
  	Then \eqref{eq:twj} can be written as follows:
  	\begin{align}
  	\sum_{j_p=0}^{\bar {s}-1}\alpha^{((p-1)\bar {s}+j_p)w}\sum_{i=1}^u c_{(p-1)u+i,j}
  	&+
  	\sum_{\begin{substack}{a\in[\bar{n}]\setminus\cR\\a\neq p}
  		\end{substack}} \alpha^{((a-1)\bar {s}+j_a)w}
  	\sum_{j_p=0}^{\bar {s}-1}\sum_{i=1}^u c_{(a-1)u+i,j}\nonumber\\
  	&=-\sum_{b\in\cR} \alpha^{((b-1)\bar {s}+j_b)w}\sum_{j_p=0}^{\bar {s}-1}\sum_{i=1}^u c_{(b-1)u+i,j}.
  	\end{align} 
In matrix form these equations are shown in \eqref{eq:repairM} above,
where 
   $$
   \zz{\sigma_{e,j(p,0)}}:=\sum_{j_p=0}^{\bar s-1}\sum_{i=1}^u c_{(e-1)u+i,j}, \quad e=1,\dots,\bar n,
   $$
and {$j$ is} as given above after \eqref{eq:twj}.
 
 \begin{figure*}{\small 
  \begin{multline}\hspace*{-.3in}\left[ \begin{array}{*{6}{@{\hspace*{.025in}}c}}
  1&\dots&1&1&\dots&1\\
  \alpha^{\bar {s}(p-1)}&\dots&\alpha^{\bar {s}(p-1)+\bar {s}-1}&\alpha^{\bar {s}(p_1-1)+j_{p_1}}&\dots&\alpha^{\bar {s}(p_{\bar{r}-\bar{s}}-1)+j_{ {p_{\bar{r}-\bar{s} }} } }\\
  \vdots&\vdots&\vdots&\vdots&\vdots&\vdots\\
  (\alpha^{\bar {s}(p-1)})^{\bar r-1}&\dots&(\alpha^{\bar {s}(p-1)+\bar {s}-1})^{\bar r-1}&(\alpha^{\bar {s}(p_1-1)+j_{p_1} })^{\bar r-1}&\dots&(\alpha^{\bar {s}(p_{\bar{r}-\bar{s}}-1)+j_{p_{\bar{r}-\bar{s}}}})^{\bar r-1}
    \end{array}\right]
    \left[\begin{array}{c}\sum_{i=1}^u c_{(p-1)u+i,j(p,0)}\\\vdots\\\sum_{i=1}^u c_{(p-1)u+i,j(p,\bar s-1)}\\
    \sigma_{p_1,\zz{j(p,0)}}\\
    \vdots\\
    \sigma_{p_{\bar{r}-\bar{s}},\zz{j(p,0)}}
    \end{array}\right]\\
    =-\left[\begin{array}{*{3}{@{\hspace*{.025in}}c}}
     1&\dots&1\\
     \alpha^{\bar s (q_1-1) +j_{q_1}}&\dots
     &\alpha^{\bar s(q_{\bar{d}} -1)+j_{q_{\bar{d}}}}\\
     \vdots&\vdots&\vdots\\
     \alpha^{(\bar s (q_1-1) +j_{q_1})(\bar r-1)}&\dots
     &\alpha^{(\bar s(q_{\bar{d}} -1)+j_{q_{\bar{d}}})(\bar r-1)}
     \end{array}\right]
     \left[\begin{array}{c}\sigma_{q_1,\zz{j(p,0)}}\\\vdots\\\sigma_{q_{\bar{d}},\zz{j(p,0)}}
     \end{array}\right]\label{eq:repairM}
  \end{multline}}
  \end{figure*}
 
We claim that Equations \eqref{eq:repairM} suffice to recover one failed node in rack $p$. Indeed, suppose that the {$\bar d$}-dimensional vector on the right-hand side of \eqref{eq:repairM} is made available to the failed node by transmitting one symbol
of $F$ from each of the helper racks. Let us check that the matrix on the left-hand side is Vandermonde, i.e., 
that the defining elements in the second row are distinct. To see this, note that $\text{ord}(\alpha)=\bar {s}\bar n,$ and the
maximum degree of $\alpha$ in the set $\{\alpha^{\bar {s}({e}-1)+m}, m=0,\dots,\bar {s}-1{{;a=1,\ldots,\bar{n}}}\}$ is
  $$
  \bar {s}(\bar n-1)+\bar {s}-1<\bar {s}\bar n.
  $$
  Moreover, {the first $\bar{s}$ coordinates} of the multiplier vector on the left-hand side of \eqref{eq:repairM}
     $$
     \Big(\sum_{i=1}^u c_{(p-1)u+i,j(p,0)},\dots,\sum_{i=1}^u c_{(p-1)u+i,j(p,\bar r-1)}\Big)^T
     $$
contain only one unknown term which corresponds to the failed node. Thus, if the values $c_{(p-1)u+i,j(p,0)}$
of all the functional local nodes are made available to the failed node (recall that this does
not count toward the repair bandwidth), then system \eqref{eq:repairM} can be solved to find the entries of the missing
node. This calculation is repeated $\bar {s}^{\bar n-1}$ times for each assignment of the values 
$j_{\bar n},\dots,j_{p+1},j_{p-1},\dots,j_1,$ thereby completing the repair procedure.
    
Let us compute the inter-rack repair bandwidth of the described procedure. To repair the entries of the single failed node in the $p$th rack with
indices in the subset $\{j(p,a), a=0,1,\dots,\bar {s}-1\}$ we download one symbol of $F$ from each of the ${\bar d}$ helper racks.
There are $\bar {s}^{\bar n-1}$ subsets of the above form, and thus the total repair bandwidth is
   $$
   \bar d\bar s^{\bar n-1}=\frac{\bar{d}l}{\bar s},
   $$ 
proving the optimality claim of the code according to \eqref{eq:rack-bound}.

Finally let us prove that the code $\cC$ is MDS. This is immediate upon observing that each subset of parity-check equations 
isolated by fixing the value of $j=0,1,\dots,l-1$ defines an MDS code. To check this, observe that the 
set of rows {of} the parity-check matrix of $\cC$ for a fixed value $j=(j_{\bar n},\dots,j_1)$ forms a set of parities
of a generalized Reed-Solomon codes (i.e., each column is a set of powers of an element of $F$), and the
defining row of this set of parities is
\begin{align}
 |\lambda^{j_1}, \lambda^{j_1+\bar {s}\bar n},\dots, \lambda^{j_1+(u-1)\bar {s}\bar n}|\lambda^{j_2+\bar {s}},
 &\lambda^{j_2+\bar {s} (1+\bar n)},\dots,
 \lambda^{j_2+\bar {s}(1+(u-1)\bar n)}|\nonumber\\
 &\dots
 |\lambda^{j_{\bar n}+\bar {s}(\bar n-1)},\lambda^{j_{\bar n}+\bar {s}(2\bar n-1)},\dots,
 \lambda^{j_{\bar n}+\bar {s}(\bar n-1+(u-1)\bar n)}|
 \label{eq:row}
 \end{align}
 where each group between the vertical bars corresponds to a fixed value of $s=1,\dots,\bar n$ in \eqref{eq:code}.
 It suffices to show that all these elements are distinct or that
these groups do not overlap. Note that the largest power in \eqref{eq:row} is
  \begin{equation}\label{eq:max}
  j_{\bar n}+\bar {s}(\bar n-1+(u-1)\bar n)\le \bar {s}-1+u\bar n \bar {s}-\bar {s}<\bar {s}n=\text{ord}(\lambda).
  \end{equation}
  Now consider two groups and let their numbers be $a$ and $b$, where $1\le {b<a}\le\bar n$.
Then the difference between the exponents of the first elements in the {two} groups is
  $$
  (a-b)\bar {s}+(j_a-j_b)\ge 1
  $$
so the first elements are obviously distinct. Further, the exponents of the elements in each of the groups are obtained by adding
a multiple of $\bar {s} \bar n$ to the exponent of the first element, which together with \eqref{eq:max} implies that the groups are disjoint.
This shows that the code $\cC$ is MDS, and the proof is complete.
\end{IEEEproof}

We remark that the repair procedure relies on a subset of the parity-check equations of the code $\cC$. Namely, the only
rows of the parity-check matrix that we use are the rows whose numbers are integer multiples of the size of the rack $u$. It suffices to use
only these parities because the assumptions of the rack model are relaxed compared to the standard definition of regenerating codes.
The remaining parities support the MDS property of the code $\cC$ and do not contribute to the repair procedure. 

In Sec. \ref{sec:OArack} we construct codes with somewhat better parameters than the codes given by Construction~\ref{Construction1}.
Specifically, the smallest field size required for the code family in Sec.~\ref{sec:OArack} is $n+\bar s-1$ (as opposed to $\bar s n$), and the repair procedure
accesses fewer symbols on the helper nodes than the procedure presented in the above proof.
At the same time, the codes presented in this section have the {\em optimal update property.} Namely, a codeword of the code $\cC$ 
can be viewed as an $l\times n$ array, and for a given row index $j\in\{1,\dots,l-1\}$ the $n$ symbols are encoded with a generalized RS code 
independently of the other rows. Thus, if some $k$ symbols are taken as information symbols, then the change of one symbol in the data 
requires to change $r$ parity symbols, which is also the smallest possible number \cite{Tamo14}. At the same time, the codes in the family
of Sec.~\ref{sec:OArack} do not have optimal update, and are in this respect inferior to the present construction.

\section{Low-access codes for the rack model}
This section aims at constructing an optimal-repair MSR code for the rack model that accesses a reduced number of symbols on the nodes
in the helper racks. Our presentation is formed of two parts. In the first part we
construct an optimal-access MSR code for arbitrary repair degree $k\le d\le n-1$ {\em without assuming the rack model} of storage. The code
has subpacketization $l=(d-k+1)^n$. In the second part we present a modification of this construction for the rack model, attaining
subpacketization $l=\bar s^{\bar n}.$ Note that this value is smaller than the smallest node size of known constructions of
OA codes for the homogeneous model, which is $s^n$ \cite{ye2017explicit}.

	\subsection{Optimal-access MSR codes with arbitrary repair degree for homogeneous storage}\label{sec:OAcode}
In this section we present a family of OA codes for any repair degree $k\le d\le n-1.$
Let $s=d-k+1$ and let $F,|F|\geq n+s-1$ be a finite field. 
Let $\lambda_0,\ldots,\lambda_{n-1},\mu_1,\ldots,\mu_{s-1}$ be $n+s-1$ distinct elements of $F$. 
Let $i=(i_{n-1},\ldots,i_0)$ be the $s$-ary representation of $i=0,\ldots,l-1$ and (as before) let $i(a,b)=(i_{n-1},\ldots,i_{a+1},b,i_{a-1},\ldots,i_0)$ for $0\le a \le n-1$ and $0\le b \le s-1$. For brevity below we use the notation $$\delta(i):=\mathbbm{1}_{\{i=0\}}.$$
	
\begin{construction} {\rm Define an $(n,k=n-r,l=s^n)$ array code $\cC=\{\bfc=(c_{j,i})_{0\le j\le n-1; 0\le i\le l-1}\}$, where the codeword
$\bfc$ satisfies the following parity check equations over $F$:
	\begin{align}
		\sum_{j=0}^{n-1}\lambda_j^t c_{j,i} + 
		\sum_{j=0}^{n-1}\delta(i_j)\sum_{p=1}^{s-1}\mu_p^t c_{j,i(j,p)}=0 ,\quad i=0,\ldots,l-1;\,t=0,\ldots,r-1.
				\label{eq:oa-msr-pc}
	\end{align}
Since later in this section we rely on multiplicative structure of $F$, we label the nodes $0,\dots,n-1$ and not $1,\dots,n$ as in Construction \ref{Construction1}.
In the next subsection we will also label the racks from $0$ to $\bar n-1$ for the same reason.}
\end{construction}
	
\begin{theorem} The code $\cC$ defined in \eqref{eq:oa-msr-pc} is an optimal-access MDS array code.
\end{theorem}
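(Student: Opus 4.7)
The plan is to establish the two claims separately: an optimal-access repair scheme for any single failed node, and the MDS property. The approach follows the template of the Ye-Barg MSR family \cite{ye2017explicit}, adapted to the specific coupling pattern in \eqref{eq:oa-msr-pc}, in which the ``diagonal'' coefficient at node $j$ is $\lambda_j^t$ while the coupling between $c_{j,i}$ and $c_{j,i(j,p)}$ at the same node carries the coefficient $\mu_p^t$.

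For the repair, I would fix a failed node $j^*$ and a helper set $H\subseteq[n]\setminus\{j^*\}$ with $|H|=d$, and partition the $l=s^n$ coordinates of node $j^*$ into $s^{n-1}=l/s$ \emph{repair groups} $G_i=\{i(j^*,0),\ldots,i(j^*,s-1)\}$, one per $i$ with $i_{j^*}=0$. At each helper $j\in H$ the declared access set is $\{c_{j,i'}:i'_{j^*}=0\}$, of size $l/s$; every coordinate $c_{j,i'}$ with $j\neq j^*$ that appears in a parity-check equation indexed by $(i,t)$ with $i_{j^*}=0$ satisfies $i'_{j^*}=0$, since such $i'$ equals either $i$ or $i(j,p)$ and in the latter case only the $j$-th coordinate is altered. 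Within one group the $r$ equations $(i,t)$, $t=0,\ldots,r-1$, see the $s$ unknowns of $j^*$ via the $r\times s$ Vandermonde matrix with generators $\lambda_{j^*},\mu_1,\ldots,\mu_{s-1}$, all distinct by construction. When $d<n-1$, I would also need to eliminate the contributions of the $r-s$ non-participating nodes $b\in B:=[n]\setminus(\{j^*\}\cup H)$; here I would adjoin the ancillary equations $(i(b,p),t)$ with $i_b=0$ and $p\ge 1$, forming an enlarged system whose coefficient matrix is a block Vandermonde built from the $n+s-1$ distinct scalars $\{\lambda_j\}\cup\{\mu_p\}$, and would deduce that $s$ linearly independent relations on the unknowns of $j^*$ survive after the unknowns coming from $B$ are canceled.

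The MDS property amounts to showing that for every $J\subset[n]$ with $|J|=r$ the $rl\times rl$ block matrix $\mathcal{A}_J=(A_{t,j})_{0\le t\le r-1,\,j\in J}$ is invertible, where $A_{t,j}=\lambda_j^t I_l+\sum_{p=1}^{s-1}\mu_p^t N_p^{(j)}$ and $N_p^{(j)}$ has a $1$ in position $(i,i(j,p))$ whenever $i_j=0$ and zero entries elsewhere. Since each $N_p^{(j)}$ is nilpotent of index two, I would use a sequence of block-row operations to bring $\mathcal{A}_J$ into block-triangular form whose determinant factors as a product of classical Vandermonde determinants in the $n+s-1$ distinct and nonzero parameters $\lambda_0,\ldots,\lambda_{n-1},\mu_1,\ldots,\mu_{s-1}$.

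The main technical obstacle appears in the same place in both halves of the proof: justifying the Vandermonde-style invertibility that emerges after the appropriate unwanted unknowns have been eliminated. In the repair argument this must be carried out uniformly over every admissible helper set so that the scheme is universal for all $k\le d\le n-1$; in the MDS argument the analogous triangularization must be uniform in the $r$-subset $J$. Once the Vandermonde invertibility is in place, the access tally of $l/s$ per helper, matching Proposition~\ref{prop:optimal-access}, is a direct count.
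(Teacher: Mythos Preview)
Your access set is correct, and the nilpotent observation $N_p^{(j)}{}^2=0$ is relevant. But the repair argument has a genuine gap at the step where you ``adjoin the ancillary equations $(i(b,p),t)$'' for $b\in B$ with $i_b=0$. Each adjoined equation is still indexed by some $i'\in\cI$ (since $i'_{j^\ast}=0$), so it carries its own coupling terms $\mu_q^t\,c_{j^\ast,i'(j^\ast,q)}$, which belong to a \emph{different} repair group $G_{i'}$, and it also introduces fresh unknowns from every other $b'\in B$ with $i'_{b'}=0$. The system you are assembling is not local to one group; it cascades across all of $\cI$, and the asserted ``block Vandermonde'' structure is not visible without an organizing principle you have not supplied.

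The paper's device is precisely that principle: an induction on $a=|\{j\in\cJ:i_j=0\}|$, where $\cJ=\{j^\ast\}\cup B$. One partitions $\cI$ into layers $\cI_a$, $a=1,\dots,n-d$, and processes them in increasing $a$. For a fixed $i\in\cI_a$ the $r$ parity checks $(i,t)$, $t=0,\dots,r-1$, form an honest $r\times r$ Vandermonde system (generators $\mu_1,\dots,\mu_{s-1},\lambda_{j_1},\dots,\lambda_{j_{n-d}}$) in the $n-d$ unknowns $c_{j,i}$, $j\in\cJ$, together with the $s-1$ \emph{aggregates} $\rho_{i,p}=\sum_{j\in\cJ_a}c_{j,i(j,p)}$. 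The key point you are missing is that for $j\in\cJ_a\setminus\{j^\ast\}$ and $p\ne0$ one has $i(j,p)\in\cI_{a-1}$, so every summand of $\rho_{i,p}$ except $c_{j^\ast,i(j^\ast,p)}$ was already recovered at the previous layer and can be subtracted off. Without this layer-by-layer disentangling there is no way to separate the failed-node unknowns from those of the non-helpers across repair groups.

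The MDS half is the same induction, now on $a=|\{j\in J:i_j=0\}|$ for the erased $r$-set $J$, running from $a=0$. Here no $\rho$-subtraction is needed because \emph{every} coupled term $c_{j,i(j,p)}$ with $j\in J$ lands at level $a-1$, so each diagonal block is the $r\times r$ Vandermonde in $\lambda_{j_1},\dots,\lambda_{j_r}$. This is exactly the block-triangularization you allude to, with the filtration indexed by $a$; your sketch becomes a proof once you name that filtration, but as written the ``sequence of block-row operations'' is unspecified.
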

\begin{IEEEproof}
	{\sc I. Optimal-access property.}
Let $j_1\in \{0,\dots,n-1\}$ and suppose that $c_{j_1}$ is the failed node. Choose any set $\cR$ of $d$ nodes which will be the helper nodes in the repair procedure. 
	
Let $\cJ=\cR^c$ be the complement of $\cR$ in the set of all nodes. Choose $a, 1\le a\le n-d$ and let $\zz{\cJ_{a}}, |\zz{\cJ_{a}}|=a, j_1\in\zz{\cJ_{a}}$ be any subset of the set $\cJ.$ In particular, $\cJ_1=\{j_1\},$
and there are $\binom{n-d-1}{a-1}$ possible choices for $\zz{\cJ_{a}}, a\ge 2$ (overloading the symbol $a,$ we use it both as the size of the
subsets and the label of the subsets of size $a$).

Let $\cI \subset \{0,\dots,l-1\}$ be the set of indices such that $i_{j_1}=0$ and define a subset $\cI_a\subset\cI$ as follows:
    \begin{gather*}
    \cI_a=\bigcup_{\zz{\cJ_{a}}\subseteq\cJ}\cI(\zz{\cJ_{a}}),\quad a=1,\dots,n-d
    \end{gather*}
where
    \begin{gather*}
    \cI(\zz{\cJ_{a}}):=\{i\in\{0,\dots,l-1\}|i_j=0,j\in\zz{\cJ_{a}}; i_{j'} \neq 0, j'\in\cJ\setminus\zz{\cJ_{a}}\}.
    \end{gather*}
It is easy to see that the sets $\cI_{\zz{a}}$ partition the set $\cI:$ Indeed, these sets are clearly disjoint,
$|\cI(\zz{\cJ_{a}})|=(s-1)^{n-d-a}s^{n-(n-d)}$ for any choice of $\zz{\cJ_{a}},$ and
   $$
   \sum_{a=1}^{n-d}\binom{n-d-1}{a-1}(s-1)^{n-d-a}s^d=s^{n-1}=|\cI|.
   $$
	
The repair of the node $c_{j_1}$ will be accomplished using the parity-check equations that correspond to $i\in \cI$ and all $t$ in 
\eqref{eq:oa-msr-pc}.
We will use induction on $a.$ 

Let us start with the case $a=1.$ Our aim is to show that it is possible to find the values
   \begin{align}
    &c_{j_1,i(j_1,p)},\quad p=0,\ldots,s-1 \label{eq:j1}\\
    &c_{j_w,i},\quad w=2,\ldots,n-d; i\in\cI_1\label{eq:w}
    \end{align}
from the helper nodes $\{c_j\mid j\in \cR \}$. By definition of the set $\cI_1$ we have $\delta(i_j)=0$ for $i\in \cJ\backslash \cJ_1$.
On account of this, from \eqref{eq:oa-msr-pc}, for $i\in\cI_1$, we have 
	\begin{align}
		\sum_{j\in\cJ}\lambda_{j}^t c_{j,i}
		+
		\sum_{p=1}^{s-1}\mu_{p}^t c_{j_1,i(j_1,p)}
		&= - 
		\sum_{j\in\cR}\Big(\lambda_{j}^t c_{j,i}
		+
				\delta(i_j)
		\sum_{p=1}^{s-1}\mu_{p}^t c_{j,i(j,p)}\Big)
		\label{eq:oa-msr-pc-i}
	\end{align}
for all $t=0,1,\dots,r-1$. For simplicity let us denote the right-hand side of \eqref{eq:oa-msr-pc-i} by $\sigma_{i,t}(\cJ_1).$
Writing Equations \eqref{eq:oa-msr-pc-i} in matrix form and reordering the variables to match \eqref{eq:j1}-\eqref{eq:w}, 
we obtain
	\begin{align}
		\begin{bmatrix}
		1 & 1 & \cdots & 1 & 1 & \cdots  & 1 \\
		\lambda_{j_1} & \mu_{1} & \cdots & \mu_{s-1} & \lambda_{j_2} & \cdots  & \lambda_{j_{n-d}} \\
		\vdots  & \vdots & \ddots & \vdots & \vdots & \ddots & \vdots \\
		\lambda_{j_1}^{r-1} & \mu_{1}^{r-1} & \cdots & \mu_{s-1}^{r-1} & \lambda_{j_2}^{r-1} & \cdots  & \lambda_{j_{n-d}}^{r-1} 
		\end{bmatrix}
		\begin{bmatrix}
		c_{j_1,i(j_1,0)} \\
		c_{j_1,i(j_1,1)}\\
		\vdots \\
		c_{j_1,i(u_1,s-1)} \\
		c_{j_2,i} \\
		\vdots \\
		c_{j_{n-d},i}
		\end{bmatrix} =
		\begin{bmatrix}
		\sigma_{i,0}(\cJ_1) \\
		\sigma_{i,1}(\cJ_1) \\
		\vdots  \\
		\sigma_{i,r-1}(\cJ_1)
		\end{bmatrix}.\label{eq:matrix}
	\end{align}
Observe that the matrix in \eqref{eq:matrix} is invertible. Therefore, the values listed in \eqref{eq:j1}--\eqref{eq:w} can be found from the values $\{ \sigma_{i,t}(\cJ(1))\mid t = 0, \ldots, r-1\}$ for every $i\in\cI(1)$. This completes the proof of the induction basis.
	
	Now suppose that we have recovered the values $\{c_{j_1,i(j_1,p)}\mid p=0,\ldots,s-1 \}$ and $\{c_{j_w,i}\mid w=2,\ldots,n-d\}$ for every $i\in\zz{\cI_{a'}}$ and $1\le a' \le a-1$, where $2\le a\le n-d$. 
	Let us make the induction step. We begin with fixing some subset $\zz{\cJ_{a}\subseteq\cJ}$.
	Let $i\in \cI(\zz{\cJ_{a}})$. 
	From \eqref{eq:oa-msr-pc}, we have
	\begin{align}
		\sum_{j\in\cJ}\lambda_{j}^t c_{j,i}
		+
		\sum_{j\in\zz{\cJ_{a}}}
		\sum_{p=1}^{s-1}\mu_{p}^t c_{j,i(j,p)}
		& =
		\sum_{j\in\cJ}\lambda_{j}^t c_{j,i}
		+
		\sum_{p=1}^{s-1}\mu_p^t 
		\sum_{j\in\zz{\cJ_{a}}} c_{j,i(j,p)} 
		\nonumber\\
		& = - 
		\sum_{i\in \cR}\Big(\lambda_{j}^t c_{j,i}
		+
				\delta(i_j) 
		\sum_{p=1}^{s-1}\mu_{p}^t c_{j,i(j,p)}\Big)
		\label{eq:oa-msr-pc-iua}
	\end{align}
for all $t=0,1,\dots,r-1$.	Let us denote the right-hand side of \eqref{eq:oa-msr-pc-iua} by $\sigma_{i,t}(\zz{\cJ_{a}})$ and let
	\begin{align*}
		\rho_{i,p} &: = \sum_{j\in\zz{\cJ_{a}}} c_{j,i(j,p)}.
			\end{align*}
	As before, the value of $\sigma_{i,t}(\zz{\cJ_{a}})$ depends only on the helper nodes $\{c_j\mid j\in \cR\}$. Writing equations
\eqref{eq:oa-msr-pc-iua} in matrix form, we obtain
	\begin{align}
		\begin{bmatrix}
		1 & \cdots & 1 & 1 & \cdots  & 1 \\
		\mu_1 & \cdots & \mu_{s-1} & \lambda_{j_1} & \cdots  & \lambda_{j_{n-d}} \\
		\vdots & \ddots & \vdots & \vdots & \ddots & \vdots \\
		\mu_1^{r-1} & \cdots & \mu_{s-1}^{r-1} & \lambda_{j_1}^{r-1} & \cdots  & \lambda_{j_{n-d}}^{r-1} 
		\end{bmatrix}
		\begin{bmatrix}
		\rho_{i,1} \\
		\vdots \\
		\rho_{i,s-1} \\
		c_{j_1,i} \\
		\vdots \\
		c_{j_{n-d},i}
		\end{bmatrix} =
		\begin{bmatrix}
		\sigma_{i,0}(\zz{\cJ_{a}}) \\
		\sigma_{i,1}(\zz{\cJ_{a}}) \\
		\vdots  \\
		\sigma_{i,r-1}(\zz{\cJ_{a}})
		\end{bmatrix}.
		\label{eq:oa-msr-mx-iua}
	\end{align} 
	Therefore, for any $\zz{\cJ_{a}}\subseteq\cJ$ and every $i\in\cI(\zz{\cJ_{a}})$, the values $\{\rho_{i,p}\mid p = 1,\ldots,s-1 \}$ and $\{c_{j_w,i}\mid w=1,\ldots,n-d \}$ can be calculated from the values $\{\sigma_{i,t}(\zz{\cJ_{a}})\mid t = 0, \ldots, r-1 \}$. Since there are no assumptions
on the choice of $\cJ_a,$ this ensures that we can find the values $\{\rho_{i,p}\mid p = 1,\ldots,s-1 \}$ and $\{c_{j_w,i}\mid w=1,\ldots,n-d \}$ for all $i\in\zz{\cI_{a}}$.

	Note that for $i\in\cI(\zz{\cJ_{a}})$, $j \in \zz{\cJ_{a}}\setminus\{j_1\}$, and $p \neq 0$, we have $i(j,p) \in \zz{\cI_{a-1}}$. By the induction hypothesis, we have recovered the values $\{c_{j_w,i} \mid i\in\zz{\cI_{a-1}},w=2,\ldots,n-d  \}$, and therefore, we know the values $\{c_{j,i(j,p)} \mid j\in\zz{\cJ_{a}}\setminus\{j_1\},p\neq 0\}$ for each $i\in\cI_{\zz{a}}$. 
	With these values and $\{\rho_{i,p} \mid i\in\cI_{\zz{a}},p=1,\ldots,s-1 \}$, we can obtain the values $\{c_{j_1,i(j_1,p)}\mid p=1,\ldots,s-1 \}$.
	
	Thus, overall we can recover the values $\{c_{j_1,i(j_1,p)}\mid i\in\cI,p=0,\ldots,s-1 \}=\{c_{j_1,i}\mid i=0,\ldots,l-1\}$ from the helper nodes $\{c_j \mid j\notin\cJ \}$.
	
	Now let us count the number of symbols we access in each helper node. It is clear from the definition of $\zz{\sigma_{i,t}(\cJ_{a})}$ that we need to access the symbols $\{c_{j,i}\mid i\in\cI\}$ for each $j\in \cR$. Hence, the number of symbols we access to repair $c_{j_1}$ is 
	\begin{align}
		d|\cI|=ds^{n-1}=\frac{dl}{d-k+1}.
	\end{align}
	In other words, the number of symbols we access meets the cut-set bound \cite{dimakis2010network} for the repair bandwidth. Furthermore, observe that the set of symbols we access in each helper node depends on index of failed node but not the index of the helper node.
Indeed, even though the entries of the helper nodes involved in \eqref{eq:oa-msr-pc-iua} depend on $j\in\cR,$ overall the entries
of the helper nodes accessed are indexed by the values $i\in \cI$ (since the sets $\cI_a$ partition the set $\cI$).
 Thus, the repair matrices can be defined independently of the choice of the subset $\zz{\cR\subseteq\{0,\dots,n-1\}}.$
	
\vspace*{.1in}	{\sc II. MDS property.}
We will show that any $r$ nodes can be recovered from the other $k$ nodes of the codeword.

Let $\cJ=\{j_1,\ldots,j_r\}\subseteq\{0,\ldots,n-1\}$ be a set of $r$ nodes. 
	As before, we will denote $a$-subsets of $\cJ$ by $\zz{\cJ_{a}}$, $\zz{0\le a\le r}$.
		Let $\cI_0=\{ i=(i_{n-1},\ldots,i_0)\in\{0,1,\dots,l-1\} \mid i_{j} \neq 0, j\in\cJ \}$ and let
\begin{gather*}
		\cI(\zz{\cJ_{a}})=\{ i=(i_{n-1},\ldots,i_0)\in\{0,1,\dots,l-1\} \mid i_j=0,j\in\zz{\cJ_{a}}; i_{j'} \neq 0, j'\in\cJ\setminus\zz{\cJ_{a}} \},\\
		\hspace*{4in}{\zz{\cJ_{a}}\subseteq\cJ, 1 \le a \le r},\\
		\cI_{a}=\bigcup_{\zz{\cJ_{a}}\subseteq\cJ}\cI(\zz{\cJ_{a}}).
\end{gather*}
Observe that the sets $\cI_a, 0\le a\le r$ partition the set $\{0,1,\dots,l-1\}.$
	
To prove the MDS property we use induction on $a.$
For the induction basis, we will show that it is possible to recover the values $\{c_{j,i}\mid j\in\cJ ,i\in\cI_0\}$ from the nodes $\{c_j \mid j\in \cJ^c \}$. From \eqref{eq:oa-msr-pc}, for $i\in\cI_0$, we have
	\begin{align}
		\sum_{j\in\cJ}\lambda_{j}^t c_{j,i}
		&= - 
		\sum_{j\in \zz{\cJ^{c}}}\Big(
		\zz{\lambda_{j}^t c_{j,i}
		+
				\delta(i_j)}\sum_{p=1}^{s-1}\mu_{p}^t c_{j,i(j,p)}
			\Big), \quad t=0,\ldots,r-1.
		\label{eq:mds}
	\end{align}
To simplify notation, denote the right-hand side of \eqref{eq:mds} by $ \sigma_{i,t}=\sigma_{i,t}(\emptyset).$	
	Note that the value of $\sigma_{i,t}$ depends only on the nodes $\{c_j\mid j\in \cJ^c\}$. Writing \eqref{eq:mds} 
in matrix form, we obtain
	\begin{align*}
		\begin{bmatrix}
		1 & \cdots  & 1 \\
		\lambda_{j_1} & \cdots  & \lambda_{j_{r}} \\
		\vdots & \ddots & \vdots \\
		\lambda_{j_1}^{r-1} & \cdots  & \lambda_{j_{r}}^{r-1} 
		\end{bmatrix}
		\begin{bmatrix}
		c_{j_1,i} \\
		c_{j_2,i} \\
		\vdots \\
		c_{j_r,i}
		\end{bmatrix} =
		\begin{bmatrix}
		\sigma_{i,0} \\
		\sigma_{i,1} \\
		\vdots  \\
		\sigma_{i,r-1}
		\end{bmatrix}.
	\end{align*}
This equation implies that the values $\{c_{j,i} \mid j\in\cJ \}$ can be calculated from the values $\{ \sigma_{i,t}\mid t = 0, \ldots, r-1\}$ for every $i\in\cI_0$.
	
For the induction step, let $1\le a\le r$ and suppose that it is possible to recover the values $\{c_{j,i}\mid j\in\cJ \}$ for every $i\in\cI_{a'}$ and $0\le a' \le a-1$ from the nodes $\{c_j\mid j\in\zz{\cJ^{c}} \}.$
	
Now let us fix a set $\zz{\cJ_{a}\subseteq \cJ}$ and let $i\in \cI(\zz{\cJ_{a}})$. 
	From \eqref{eq:oa-msr-pc}, we have
	\begin{align}
		\sum_{j\in\cJ}\lambda_{j}^t c_{j,i}
		& =
		-
		\sum_{p=1}^{s-1}\mu_p^t 
		\sum_{j\in\zz{\cJ_{a}}} c_{j,i(j,p)} 
		 - 
		\sum_{j\in\zz{\cJ^{c}}}\Big(
		\zz{\lambda_{j}^t c_{j,i}
		+
				\delta(i_j)}\sum_{p=1}^{s-1}\mu_{p}^t c_{j,i(j,p)}\Big)\nonumber\\
				&=:-\rho'_{i,t}- \sigma_{i,t}(\zz{\cJ_{a}})
		\label{eq:mds-iua}
	\end{align}
$t=0,1,\dots,r-1,$ where the second line serves to introduce the shorthand notation.
We argue that both quantities $\rho'$ and $\sigma$ can be found from the nodes outside the set of the chosen $r$ nodes, i.e., from
$\{c_j\mid j\in\cJ^c\}.$ This claim is obvious for the $\sigma$'s and
constitutes the induction hypothesis for the $(\rho')$'s. Indeed, 
for $i\in\cI(\zz{\cJ_{a}})$, $j \in \zz{\cJ_{a}}$ and $p \neq 0$, we have $i(j,p) \in \cI_{a-1}$. 
By the induction hypothesis, we have recovered the values $\{c_{j,i} \mid i\in\cI_{a-1}, j\in\cJ \}$, 
and therefore, we know the values $\{c_{j,i(j,p)} \mid j\in\zz{\cJ_{a}},p\neq 0 \}$ for each $i\in\cI_{\zz{a}}$. 
Writing relations \eqref{eq:mds-iua} for all $t=0,\ldots,r-1$ in matrix form, we obtain
	\begin{align}
		\begin{bmatrix}
		1 & \cdots  & 1 \\
		\lambda_{j_1} & \cdots  & \lambda_{j_{r}} \\
		\vdots & \ddots & \vdots \\
		\lambda_{j_1}^{r-1} & \cdots  & \lambda_{j_{r}}^{r-1} 
		\end{bmatrix}
		\begin{bmatrix}
		c_{j_1,i} \\
		c_{j_2,i} \\
		\vdots \\
		c_{j_r,i}
		\end{bmatrix} =
		\begin{bmatrix}
		\rho'_{i,0}+\sigma_{i,0}(\zz{\cJ_{a}}) \\
		\rho'_{i,1}+\sigma_{i,1}(\zz{\cJ_{a}}) \\
		\vdots  \\
		\rho'_{i,r-1}+\sigma_{i,r-1}(\zz{\cJ_{a}})
		\end{bmatrix}.
	\end{align}
This establishes the induction step. Therefore, it is possible to find the values $\{c_{j,i} \mid j\in\cJ \}$ for every $i\in\cI(\zz{\cJ_{a}})$ and $\zz{\cJ_{a}}\subseteq\cJ$. It follows that we can recover the values $\{c_{j,i}\mid j\in\cJ \}$ for every $i\in\cI_{a}$ and all $a=0,1,\dots,r.$ 
Since the sets $\zz{I_a, 0\le a\le r}$ form a partition the set $\{0,1,\dots,l-1\}$, we have shown that all the values $\{c_{j,i} \mid j\in\cJ,i\in\cI_{a},0\le a\le r\}=\{c_{j,i}\mid j\in\cJ,i\in\{0,1,\dots,l-1\}\}$ can be recovered from the available nodes $\{c_j\mid j\in\cJ^c\}.$ Thus, any $r$ nodes in the
codeword can be found from the complementary set of $k$ nodes, which proves the MDS property.
	\end{IEEEproof}

	\subsection{Rack-aware MSR codes with low access}\label{sec:OArack}
In this section we adapt the code family constructed in Sec.~\ref{sec:OAcode} for the rack-aware storage model.
This result is obtained by adjusting the sub-packetization and by carefully choosing the elements $\lambda_0,\ldots,\lambda_{n-1}.$  

We aim to construct an $(n,k,l)$ MDS array code over $F$, where $n=\bar n u,$ and $u$ is the size of the \zz{rack}.
Recall that $\bar{s}=\bar{d}-\bar{k}+1$ where $\bar{k}\le \bar{d}\le \bar{n}-1,$ and $\bar{k}=\lfloor k/u \rfloor$. Let $|F|\geq n+\bar{s}-1$ and $n|(|F|-1).$ Let $\lambda\in F$ be an element of multiplicative order $n=\bar{n}u$, and let $\mu_1,\ldots,\mu_{\bar{s}-1}$ be $\bar{s}-1$ distinct elements in $F\setminus\{\lambda^i\mid i=0,\ldots,n-1 \}$. For $j=0,\ldots,n-1$, let us write $j={e}u+{g}$ where $0\le {e}< \bar{n}$ and $0\le {g}< u.$

We construct an  rack-aware low-access MSR code over $F$ that can repair any single node from any $\bar{d}$ helper racks.

\begin{construction}\label{OAconstruction} Define an $(n,k=n-r,l=\bar{s}^{\bar{n}})$ array code $\cC=\{(c_{j,i})_{0\le j\le n-1; 0\le i\le l-1}\}$ by the following parity-check equations over $F$:
	\begin{align}
		\sum_{j=0}^{n-1}\lambda_j^t c_{j,i} + 
		\sum_{j=0}^{n - 1} 
		\zz{\delta(i_e)} 
		\sum_{p=1}^{\bar{s}-1}\mu_p^t c_{j,i({e},p)}
		=0,\label{eq:oa-msr-pc-rack}
	\end{align}
	where $\lambda_j = \lambda^{{e}+{g}\bar{n}}$, $i=0,\ldots,l-1$ and $t=0,\ldots,r-1.$
\end{construction}

We will show that this code family supports optimal repair while accessing $l/{\bar s}$ symbols on each of the nodes in the helper racks, which
is by a factor of $s/\bar s\approx u$ greater than the bound in Prop.~\ref{prop:optimal-access}. While these codes stop short of attaining the bound \eqref{eq:bound-access}, they have
lower access requirement than the codes given by Construction~\ref{Construction1}, which access all symbols of the helper nodes, i.e., $\bar s$ times 
more symbols than the current construction.

\begin{theorem}
The code $\cC$ defined in \eqref{eq:oa-msr-pc-rack} is an optimal-repair MDS array code. The repair procedure accesses $l/{\bar s}$ symbols on each
of the nodes in $\bar d$ helper racks. The repair scheme does not depend on the choice of the subset of $\bar d$ helper racks.
\end{theorem}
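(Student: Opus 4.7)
The strategy is to reduce the rack-aware repair problem to an instance of the homogeneous OA-code problem of Section~\ref{sec:OAcode}, by passing from the raw node entries $c_{eu+g,i}$ to the intra-rack sums $\tilde c_{e,i}:=\sum_{g=0}^{u-1}c_{eu+g,i}$, and then applying the OA repair scheme to the resulting outer code.

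First I would verify the MDS property by adapting Part~II of the proof of Section~\ref{sec:OAcode}: the same partitioning of $\{0,\ldots,l-1\}$ into layers $\cI_a$ and induction on $a$ go through verbatim once $\delta(i_j)$ is replaced by $\delta(i_e)$, where $j=eu+g$. At each induction step one obtains a Vandermonde system in the elements $\lambda_j=\lambda^{e+g\bar n}$, which take $n$ distinct values since the exponents $\{e+g\bar n:0\le e<\bar n,\,0\le g<u\}$ exhaust $\{0,\ldots,n-1\}$ and $\lambda$ has multiplicative order $n$.

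For the repair, let $j^\ast=e^\ast u+g^\ast$ be the failed node, $\cR$ the set of $\bar d$ helper racks, and $\cI:=\{i:i_{e^\ast}=0\}$, which has $l/\bar s$ elements. Each helper rack $e\in\cR$ is to compute and transmit, for each $i\in\cI$, the single symbol $\tilde c_{e,i}$, giving $l/\bar s$ symbols accessed per node and $l/\bar s$ symbols transmitted per helper rack, with no dependence on $\cR$. The key algebraic reduction is as follows. Restrict to parity-check equations \eqref{eq:oa-msr-pc-rack} with $t=wu$ for $w=0,\ldots,\bar r-1$. Because $(\lambda^{\bar n})^u=1$, the identities
\[
\sum_{g=0}^{u-1}\lambda^{wu(e+g\bar n)}c_{eu+g,i}=\lambda^{wue}\tilde c_{e,i},\qquad \sum_{g=0}^{u-1}c_{eu+g,i(e,p)}=\tilde c_{e,i(e,p)}
\]
hold, so the selected parity-check equations collapse to
\[
\sum_{e=0}^{\bar n-1}\lambda^{wue}\tilde c_{e,i}+\sum_{e=0}^{\bar n-1}\delta(i_e)\sum_{p=1}^{\bar s-1}\mu_p^{wu}\tilde c_{e,i(e,p)}=0,\quad w=0,\ldots,\bar r-1.
\]
These are precisely the parity-check equations of an outer $(\bar n,\bar k,l=\bar s^{\bar n})$ OA MSR code in the style of Section~\ref{sec:OAcode}, acting on the rack-sum codeword $\{\tilde c_{e,i}\}$ with defining scalars $\lambda^u$ (of order $\bar n$) and $\mu_p^u$. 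Applying the OA repair scheme of that section to the outer code yields the values $\tilde c_{e^\ast,i(e^\ast,p)}$ for all $i\in\cI$ and $p=0,\ldots,\bar s-1$. Since $\tilde c_{e^\ast,i(e^\ast,p)}=c_{j^\ast,i(e^\ast,p)}+\sum_{g\ne g^\ast}c_{e^\ast u+g,i(e^\ast,p)}$ and the latter sum consists of local nodes (available free of cost), subtracting the local contributions recovers all $l$ entries of $c_{j^\ast}$.

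The main obstacle is to verify that the outer rack-sum code satisfies the hypotheses of Section~\ref{sec:OAcode}, namely that $\lambda^{u\cdot 0},\ldots,\lambda^{u(\bar n-1)},\mu_1^u,\ldots,\mu_{\bar s-1}^u$ are $\bar n+\bar s-1$ distinct elements of $F$. The $\lambda^{ue}$ are the $\bar n$ distinct $\bar n$-th roots of unity; and since $\mu_p\notin\{\lambda^i\}$ one has $\mu_p^n\ne 1$, hence $\mu_p^u$ is not an $\bar n$-th root of unity and differs from every $\lambda^{ue}$. The residual pairwise distinctness $\mu_p^u\ne\mu_q^u$ for $p\ne q$ can be enforced by picking the $\mu_p$ in distinct cosets of the group of $u$-th roots of unity in $F^\ast$, which is feasible for fields of the stated size.
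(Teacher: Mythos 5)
Your proposal is correct, and at its core it uses the same mechanism as the paper's proof: restrict to the parity rows $t=uw$, use $\lambda_{eu+g}^{uw}=\lambda^{euw}$ (since $\lambda^{n}=1$) to collapse the equations onto the rack sums $\tilde c_{e,i}=\sum_{g}c_{eu+g,i}$, run the induction of Section~\ref{sec:OAcode} on these sums, and finally strip off the freely available local nodes; the paper carries out that induction inline (its $\pi_{i,e}$ is exactly your $\tilde c_{e,i}$), whereas you invoke the homogeneous OA repair theorem as a black box on the outer $(\bar n,\bar k,l=\bar s^{\bar n})$ rack-sum code with evaluation points $\lambda^{ue}$ and $\mu_p^u$. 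This repackaging buys one real dividend: it forces an explicit check that the collapsed points are distinct, and your remark that $\mu_p^u\ne\mu_q^u$ does \emph{not} follow from the paper's stated choice of the $\mu_p$ (distinct elements outside $\{\lambda^i\}$) is a genuine refinement rather than pedantry — the paper needs it too, since by \eqref{eq:oa-msr-pc-i-rack} the Vandermonde systems \eqref{eq:ma} and \eqref{eq:oa-msr-mx-iua-rack} have defining elements $\lambda^{e_iu}$ and $\mu_p^u$ (the displayed matrices write $\mu_p$, eliding the $u$-th powers), so the $\mu_p$ should indeed be chosen in distinct cosets of the $u$-th roots of unity, as you require; this is feasible because the paper's assumptions ($n\mid(|F|-1)$ together with the existence of any $\mu_p\notin\{\lambda^i\}$) force $|F|\ge 2n+1$, hence $(|F|-1)/u\ge 2\bar n\ge\bar n+\bar s-1$. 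Your other distinctness checks ($\lambda^{ue}$ pairwise distinct, and $\mu_p^u\notin\{\lambda^{ue}\}$ because $\mu_p\notin\langle\lambda\rangle$) coincide with what the paper uses implicitly, your access/download counts ($|\cI|=l/\bar s$ per node and per rack, depending only on the host rack) match the paper's, and the MDS argument you sketch — the same layered induction with $\delta(i_e)$ in place of $\delta(i_j)$ and invertibility from the $n$ distinct exponents $e+g\bar n$ — is exactly the paper's Part on the MDS property.
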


\begin{IEEEproof}
	\emph{Optimal repair property.}
	
	Suppose $c_{j_1}$ is the failed node, where $j_1={e}_1 u+{g}_1$. Let $\cR$ be the set of helper racks and let $\cJ=\zz{\{0,\ldots,\bar{n}-1\}}\backslash \cR.$
We write this set as $\cJ=\{{e}_1,{e}_2,\ldots,{e}_{\bar{n}-\bar{d}}\}.$ 
For a given $a, 1 \le a \le \bar{n}-\bar{d}$ we will need $a$-subsets of $\cJ$, which we denote by $\zz{\cJ_{a}}.$ We always assume that ${e}_1\in \zz{\cJ_{a}}.$ As before, let $\cI\subset \{0,1,\dots,l-1\}$ be the subset of indices such that $i_{{e}_1}=0$; let
  $$
  \cI_1=\{ i=(i_{\bar{n}-1},\ldots,i_0)\in\zz{\{0,\ldots,l-1\}} \mid i_{{e}_1} = 0; i_{{e}} \neq 0, {e}\in{\cJ}\setminus{\cJ}_1 \}
  $$ 
  and define 
    $$
    \cI_a=\bigcup_{\zz{\cJ_a}\subseteq{\cJ}}\cI(\zz{\cJ_a}), \quad a=2,\dots,\bar n-\bar d,
  $$
where
    $$
    \cI(\zz{\cJ_a})=\{ i=(i_{\bar{n}-1},\ldots,i_0)\in\{0,\dots,l-1\} \mid i_{e}=0,{e}\in\zz{\cJ_a}; i_{{e}} \neq 0, {e}\in{\cJ}\setminus\zz{\cJ_a} \}.
    $$
%
%
	
Recall that $\bar{r}=\bar{n}-\bar{k}$. We will use the parity check equations corresponding to $i\in\cI$ and all powers $t=uw,w=0,\ldots,\bar{r}-1$ to repair $c_{j_1}$. To show that the repair is possible,  we argue by induction on $a=1,\ldots,\bar{n}-\bar{d}$.
	
To prove the induction basis, we show that it is possible to recover the values 
$\{c_{j_1,i({e}_1,p)}\mid p=0,\ldots,\bar{s}-1 \}$ and $\{\sum_{{g}=0}^{u-1}c_{{e} u+{g},i}\mid {e}\in{\cJ}\setminus{\cJ}_1\}$ for every $i\in\cI_1$ from the helper racks $\zz{\cR}$. 
	From \eqref{eq:oa-msr-pc-rack}, for $i\in\cI_1$, we have
	\begin{align}
		\sum_{{e}\in{\cJ}}\sum_{{g}=0}^{u-1}\lambda_{{e} u+{g}}^t c_{{e} u+{g},i}
		+
		&\sum_{{g}=0}^{u-1}\sum_{p=1}^{\bar{s}-1}\mu_{p}^t c_{{e}_1 u+{g},i({e}_1,p)}\nonumber\\
		&= - 
		\sum_{{e}\in\cR}\sum_{g=0}^{u-1}\Big(
		\zz{\lambda_{{e}u+{g}}^t c_{{e}u+{g},i}
		+
		\delta(i_e)}
		\sum_{p=1}^{\bar{s}-1}\mu_{p}^t c_{{e}u+{g},i({e},p)}\Big).
	\end{align}
	
	Using $t=uw$, $\lambda_{{e}u+{g}}=\lambda^{{e}+{g}\bar{n}}$, and $\lambda^{\bar{n}u}=1$, we obtain
	\begin{align}
		\sum_{{e}\in{\cJ}}\lambda^{{e} uw} \sum_{{g}=0}^{u-1}c_{{e} u+{g},i}
		&+
		\sum_{p=1}^{\bar{s}-1}\mu_{p}^{uw} \sum_{{g}=0}^{u-1}c_{{e}_1 u+{g},i({e}_1,p)}
		\nonumber\\
		&= - 
		\sum_{{e}\in\cR}\Big(
		\lambda^{{e}uw} \sum_{{g}=0}^{u-1}\zz{c_{{e}u+{g},i}
		+
		\delta(i_e)}
		\sum_{p=1}^{\bar{s}-1}\mu_{p}^{uw} 
		\sum_{{g}=0}^{u-1}c_{{e}u+{g},i({e},p)}\Big),
		\label{eq:oa-msr-pc-i-rack}
	\end{align}
	$i\in \cI_1,w=0,\dots,\zz{\bar{r}-1}.$
To shorten our notation, denote the right-hand side of \eqref{eq:oa-msr-pc-i-rack} by $\sigma_{i,w}({\cJ}_1)$ and let
	\begin{align*}
		\pi_{i,{e}} &:= \sum_{{g}=0}^{u-1}c_{{e} u+{g},i}.
	\end{align*}
%
Note that the value of $\sigma_{i,w}({\cJ}_1)$ only depends on the helper racks. For $i=1,\ldots,\bar{n}-\bar{d}$ 
define $\alpha_i:=\lambda^{{e}_iu}.$ Let us write equations \eqref{eq:oa-msr-pc-i-rack} for all $w=0,\ldots,\bar{r}-1$ in matrix form:
	\begin{align}\label{eq:ma}
		\begin{bmatrix}
		1 & 1 & \cdots & 1 & 1 & \cdots  & 1 \\
		\alpha_{1} & \mu_{1} & \cdots & \mu_{\bar{s}-1} & \alpha_{2} & \cdots  & \alpha_{\bar{n}-\bar{d}} \\
		\vdots  & \vdots & \ddots & \vdots & \vdots & \ddots & \vdots \\
		\alpha_{1}^{\bar{r}-1} & \mu_{1}^{\bar{r}-1} & \cdots & \mu_{\bar{s}-1}^{\bar{r}-1} & \alpha_{2}^{\bar{r}-1} & \cdots  & \alpha_{\bar{n}-\bar{d}}^{\bar{r}-1} 
		\end{bmatrix}
		\begin{bmatrix}
		\pi_{i,{e}_1} \\
		\pi_{i({e}_1,1),{e}_1}\\
		\vdots \\
		\pi_{i({e}_1,\bar{s}-1),{e}_1} \\
		\pi_{i,{e}_2} \\
		\vdots \\
		\pi_{i,{e}_{\bar{n}-\bar{d}}}
		\end{bmatrix} =
		\begin{bmatrix}
		\sigma_{i,0}(\cJ_1) \\
		\sigma_{i,1}(\cJ_1) \\
		\vdots  \\
		\sigma_{i,\bar{r}-1}(\cJ_1)
		\end{bmatrix}.
	\end{align}
	Observe that the matrix on the left-hand side of \eqref{eq:ma} is invertible. Therefore, the values $\{c_{j_1,i(j_1,p)} \mid p=0,\ldots,\bar{s}-1\}$ and $\{\sum_{{g}=0}^{u-1}c_{{e} u+{g},i}\mid {e}\in{\cJ}\setminus{\cJ}_1 \}$ can be found from the values 
	$\{\sigma_{i,w}(\cJ_1)\mid w = 0, \ldots, \bar{r}-1\}$ and the local nodes $\{c_{{e}_1 u+{g}} \mid {g}\neq {g}_1 \}$ for every $i\in\cI_1$. This completes the proof of the induction basis.

	Now let us fix $a\in\{2,\dots,\bar n-\bar d\}$ and suppose that we have recovered the values 
	\begin{gather*}
	\{c_{j_1,i({e}_1,p)}\mid p=0,\ldots,\bar{s}-1 \}\text{ and }\Big\{\sum_{{g}=0}^{u-1}c_{{e} u+{g},i} \mid {e}\in{\cJ}\setminus{\cJ}_1\Big\}
	\\i\in\cI_{a'};\ 1\le a' \le a-1
	\end{gather*}
	 from the information downloaded from the helper racks $\zz{\cR}.$
	
Fix a subset $\zz{\cJ_a}, |\zz{\cJ_a}|=a,$ and let $i\in \cI(\zz{\cJ_a})$. 
	From \eqref{eq:oa-msr-pc-rack}, we have
	\begin{align}
		\sum_{{e}\in{\cJ}}\sum_{{g}=0}^{u-1}\lambda_{{e} u+{g}}^t c_{{e} u+{g},i}
		&+
		\sum_{{e}\in\zz{\cJ_a}}\sum_{{g}=0}^{u-1}\sum_{p=1}^{\bar{s}-1}\mu_{p}^t c_{{e} u+{g},i({e},p)}\nonumber\\
		=
		&\sum_{{e}\in{\cJ}}\sum_{{g}=0}^{u-1}\lambda_{{e} u+{g}}^t c_{{e} u+{g},i}
		+
		\sum_{p=1}^{\bar{s}-1}\mu_{p}^t 
		\sum_{{e}\in\zz{\cJ_a}}
		\sum_{{g}=0}^{u-1}
		c_{{e} u+{g},i({e},p)}\nonumber\\
		=  
		&-\sum_{{e}\in\cR}\Big(\sum_{{g}=0}^{u-1}\lambda_{{e}u+{g}}^t \zz{c_{{e}u+{g},i}
		+
		\delta(i_e)}
		\sum_{p=1}^{\bar{s}-1}\mu_{p}^t 
		\sum_{{g}=0}^{u-1} c_{{e}u+{g},i({e},p)}\Big).
	\end{align}
	Using $t=uw$, $\lambda_{{e}u+{g}}=\lambda^{{e}+{g}\bar{n}}$, and $\lambda^{\bar{n}u}=1$, we obtain
	\begin{align}
		\sum_{{e}\in{\cJ}}\lambda^{{e} uw} \sum_{{g}=0}^{u-1}c_{{e} u+{g},i}
		&+
		\sum_{p=1}^{\bar{s}-1}\mu_{p}^{uw} 
		\sum_{{e}\in\zz{\cJ_a}}\sum_{{g}=0}^{u-1}c_{{e}_1 u+{g},i({e}_1,p)}
		\nonumber\\
		&= - 
		\sum_{{e}\in \cR}\Big(\lambda^{{e}uw} \sum_{{g}=0}^{u-1} \zz{c_{{e}u+{g},i}
		+
		\delta(i_e)}
		\sum_{p=1}^{\bar{s}-1}\mu_{p}^{uw} 
		\sum_{{g}=0}^{u-1}c_{{e}u+{g},i({e},p)}\Big).
		\label{eq:oa-msr-pc-iua-rack}
	\end{align}
Again for notational convenience denote the right-hand side of \eqref{eq:oa-msr-pc-iua-rack} by $\sigma_{i,w}(\zz{\cJ_a})$ and let
	\begin{align*}
		\rho_{i,p} & := \sum_{{e}\in\zz{\cJ_a}}\sum_{{g}=0}^{u-1}
		c_{{e} u+{g},i({e},p)}.
		\end{align*}
%
	Note that the value of $\zz{\sigma_{i,w}(\cJ_a)}$ depends only on the information in the helper racks. 
Let us write Equations \eqref{eq:oa-msr-pc-iua-rack} for all $w=0,\ldots,\bar{r}-1$ in matrix form:
	\begin{align}
		\begin{bmatrix}
		1 & \cdots & 1 & 1 & \cdots  & 1 \\
		\mu_1 & \cdots & \mu_{\bar{s}-1} & \alpha_{1} & \cdots  & \alpha_{\bar{n}-\bar{d}} \\
		\vdots & \ddots & \vdots & \vdots & \ddots & \vdots \\
		\mu_1^{\bar{r}-1} & \cdots & \mu_{\bar{s}-1}^{\bar{r}-1} & \alpha_{1}^{\bar{r}-1} & \cdots  & \alpha_{\bar{n}-\bar{d}}^{\bar{r}-1} 
		\end{bmatrix}
		\begin{bmatrix}
		\rho_{i,1} \\
		\vdots \\
		\rho_{i,\bar{s}-1} \\
		\pi_{i,{e}_1} \\
		\vdots \\
		\pi_{i,{e}_{\bar{n}-\bar{d}}}
		\end{bmatrix} =
		\begin{bmatrix}
		\zz{\sigma_{i,0}(\cJ_a)} \\
		\zz{\sigma_{i,1}(\cJ_a)} \\
		\vdots  \\
		\sigma_{i,\bar{r}-1}(\zz{\cJ_a})
		\end{bmatrix}.
		\label{eq:oa-msr-mx-iua-rack}
	\end{align} 
	Therefore, for any $\zz{\cJ_a}\subseteq{\cJ}$ and every $i\in\cI(\zz{\cJ_a})$, the values $\{\rho_{i,p}\mid p = 1,\ldots,\bar{s}-1 \}$ and $\{\sum_{{g}=0}^{u-1}c_{{e}u+{g},i} \mid {e}\in{\cJ} \}$ can be found
 from the values $\{\sigma_{i,w}(\zz{\cJ_a})\mid w = 0, \ldots, \bar{r}-1 \}$. 
 It follows that we can recover the values $\{\rho_{i,p}\mid p = 1,\ldots,\bar{s}-1 \}$ and $\{\sum_{{g}=0}^{u-1}c_{{e}u+{g},i} \mid {e}\in{\cJ} \}$ for all $i\in\cI_a$.

	Note that for $i\in\cI(\zz{\cJ_a})$, ${e} \in \zz{\cJ_a}\setminus{\cJ}_1$, and for $p \neq 0$, we have $i({e},p) \in \cI_{a-1}$. 
	By the induction hypothesis, we have recovered the values $\{\sum_{{g}=0}^{u-1}c_{{e}u+{g},i} \mid i\in\cI_{a-1};{e}\in{\cJ}\setminus\zz{\cJ_1} \}$, and therefore, we know the values $\{\sum_{{g}=0}^{u-1}c_{{e}u+{g},i({e},p)} \mid j\in\zz{\cJ_a}\setminus\zz{\cJ_1},p\neq 0\}$ for each $i\in\cI_{a}$. 
	With these values and $\{\rho(i,p) \mid i\in\zz{\cI_{a}},p=1,\ldots,s-1 \}$, we can obtain the values $\{\sum_{{g}=0}^{u-1}c_{{e}_1 u+{g},i} \mid p=1,\ldots,s-1 \}$. Since the values of local nodes $\{c_{{e}_1 u+{g},i} \mid {g}\neq {g}_1 \}$ are available, we can further recover the value $c_{j_1,i}$.
	
	Thus, we can obtain the values $\{c_{j_1,i({e}_1,p)}\mid p=0,\ldots,\bar{s}-1 \}$ and 
	$\{\sum_{{g}=0}^{u-1}c_{{e}_1 u+{g},i} \mid {e}\in{\cJ}\setminus{\cJ}_1 \}$ for every $i\in\cI_a$. 
	It follows that we can recover these values 
	for every $i\in\zz{\cI_{a}}$ and $1\le a \le \bar{n}-\bar{d}$ from the helper racks $\zz{\cR}$. In conclusion, we can recover the values $\{c_{j_1,i({e}_1,p)}\mid i\in\cI,p=0,\ldots,\bar{s}-1 \}=\{c_{j_1,i}\mid i=0,\ldots,l-1\}$ from the information obtained from the
	helper racks in $\cR.$
	
	Now let us count the number of symbols we access in each helper rack. It is clear from the definition of $\sigma_{i,w}(\zz{\cJ_a})$ 
that we need to access the symbols $\{c_{{e} u+{g},i}\mid 0\le {g}<u,i\in\cI\}$ for each \zz{${e}\in{\cR}$}. In other words,
we need to access $\bar{s}^{\bar{n}-1}=l/\bar s$ symbols on each node in the helper racks; thus, the total number of accessed symbols equals
$\bar d u l/s.$ Moreover, the set of symbols we access in each helper rack depends on index of the \zz{host rack} but not the index of the helper rack.

Note also that the symbols downloaded to the rack $e_1$ from any helper rack $e\in\cR$ form the subset $\{\sum_{{g}=0}^{u-1}c_{{e} u+{g},i} \mid i\in\cI\}.$ Thus, the total amount of information downloaded for the purposes of repair equals
  $$
  \bar{d}|\cI|=\bar{d}\bar{s}^{\bar{n}-1}=\frac{\bar{d}l}{\bar{d}-\bar{k}+1}.
  $$
This is the smallest possible number according to the bound \eqref{eq:rack-bound}, and thus the codes support optimal repair.

\vspace*{.2in}	\emph{MDS property.}

We will show that the contents of any $n-r$ nodes suffices to find the values of the remaining $r$ nodes.
	
	Let $\zz{\cK}=\{j_1,\ldots,j_r\}\subseteq\{0,\ldots,n-1\}$ be the set of $r$ nodes to be recovered from the set of $n-r$ nodes in $[0,n-1]\setminus\zz{\cK}$. 
	Let us write $j_b={e}_b u+{g}_b$ where $0\le {g}_b< u-1$ for $b=1,\ldots,r$.
	
	Let $\zz{\cJ}$ be the set of distinct ${e}_b,b=1\ldots,r$. For $1 \le a \le |{\cJ}|$, let $\zz{\cJ_a}\subseteq{\cJ}$ be such that $|\zz{\cJ_a}| = a$.  


	Let $\zz{\cI_0}=\{ i=(i_{\bar{n}-1},\ldots,i_0)\in\zz{\{0,\ldots,l-1\}} \mid i_{{e}} \neq 0, {e}\in{\cJ} \}$. 	
	For $1 \le a \le |{\cJ}|$ and $\zz{\cJ_a}\subseteq{\cJ}$, let $\cI(\zz{\cJ_a})=\{ i=(i_{\bar{n}-1},\ldots,i_0)\in\zz{\{0,\ldots,l-1\}} \mid i_{e}=0,{e}\in\zz{\cJ_a}; i_{{e}'} \neq 0, {e}'\in{\cJ}\setminus\zz{\cJ_a} \}$. Let $\cI(a)=\bigcup_{\zz{\cJ_a}\subseteq{\cJ}}\cI(\zz{\cJ_a})$ where $1\le a \le |{\cJ}|$. Observe that \zz{the sets $I_a,0\leq a\leq |\cJ|$ partition the set $\{0,1,\dots,l-1\}$}.
	
	We will prove by induction that we can recover the nodes in $\cJ$ from the nodes in $\{0,1,\dots,n-1\}\setminus\cJ$.
	First, let us establish the induction basis, i.e., we can recover the values $\{c_{j,i}\mid j\in\cJ \}$ for every $i\in\zz{\cI_0}$ from the nodes $\{c_j \mid j\in\zz{\cJ^{c}} \}$. From \eqref{eq:oa-msr-pc-rack}, for $i\in\zz{\cI_0}$, we have
	\begin{align}
		\sum_{j\in\cJ}\lambda_{j}^t c_{j,i}
		&= - 
		\sum_{\zz{j\in\cJ^{c}}}
		\zz{\Big(
		\lambda_{j}^t c_{j,i}
		+ 
		\delta(i_e)
		\sum_{p=1}^{\bar{s}-1}\mu_{p}^t c_{j,i({e},p)}
		\Big)}.
		\label{eq:mds-rack}
	\end{align}
	
	\zz{To simplify notation, denote the right-hand side of \eqref{eq:mds-rack} by $ \sigma_{i,t}=\sigma_{i,t}(\emptyset).$} 
	Note that the value of $\zz{\sigma_{i,t}}$ only depends on the nodes $\{c_j\mid \zz{j\in\cJ^{c}}\}$. \zz{Writing \eqref{eq:mds-rack} in matrix form}, we have
	\begin{align}
		\begin{bmatrix}
		1 & \cdots  & 1 \\
		\lambda_{j_1} & \cdots  & \lambda_{j_{r}} \\
		\vdots & \ddots & \vdots \\
		\lambda_{j_1}^{r-1} & \cdots  & \lambda_{j_{r}}^{r-1} 
		\end{bmatrix}
		\begin{bmatrix}
		c_{j_1,i} \\
		c_{j_2,i} \\
		\vdots \\
		c_{j_r,i}
		\end{bmatrix} =
		\begin{bmatrix}
		\zz{\sigma_{i,0}} \\
		\zz{\sigma_{i,1}} \\
		\vdots  \\
		\zz{\sigma_{i,r-1}}
		\end{bmatrix}.
	\end{align}
	Therefore, the values $\{c_{j,i} \mid j\in\cJ \}$ can be calculated from the values $\{ \zz{\sigma_{i,t}}\mid t = 0, \ldots, r-1\}$ for every $i\in\zz{\cI_0}$.
	
	Now let us establish the induction step. Suppose we recover the values $\{c_{j,i}\mid j\in\cJ \}$ for every $i\in\zz{\cI_{a'}}$ and $0\le a' \le a-1$ from the nodes $\{c_j\mid \zz{j\in\cJ^{c}} \}$, where $1\le a\le |{\cJ}|$.
	
	\zz{Now let us fix a set $\cJ_{a}\subseteq\cJ$ and let} $i\in \cI(\zz{\cJ_a})$. 
	From \eqref{eq:oa-msr-pc-rack}, we have
	\begin{align}
		\sum_{j\in\cJ}\lambda_{j}^t c_{j,i}
		& =
		-
		\sum_{p=1}^{\bar{s}-1}\mu_p^t 
		\sum_{j\in\cJ \colon {e}\in\zz{\cJ_a}}
		c_{j,i({e},p)} 
		- 
		\sum_{\zz{j\in\cJ^{c}}}
		\zz{\Big(
		\lambda_{j}^t c_{j,i}
		+
		\delta(i_e)
		\sum_{p=1}^{\bar{s}-1}\mu_{p}^t c_{j,i({e},p)}
		\Big)}\nonumber\\
	&\zz{=:-\rho'_{i,t}- \sigma_{i,t}(\zz{\cJ_{a}}),} 
		\label{eq:mds-iua-rack}
	\end{align}
	\zz{where the second line serves to introduce the shorthand notation.}
	Note that we know the values $\{\zz{\sigma_{i,t}(\zz{\cJ_a})}\mid t=0,\ldots,r-1 \}$ since the value $\zz{\sigma_{i,t}(\zz{\cJ_a})}$ only depends on the nodes $\{c_j\mid \zz{j\in\cJ^{c}}\}$.
	Furthermore, we also know the values $\{\zz{\rho'_{i,t}}\mid t=0,\ldots,r-1 \}$. Indeed, for $i\in\cI(\zz{\cJ_a})$, ${e} \in \zz{\cJ_a}$, and $p \neq 0$, we have $i({e},p) \in \zz{\cI_{a-1}}$. By the induction hypothesis, we have recovered the values $\{c_{j,i} \mid i\in\zz{\cI_{a-1}}, j\in\cJ \}$, and therefore, we know the values $\{c_{j,i({e},p)} \mid j\in\cJ\colon {e}\in\zz{\cJ_a},p\neq 0 \}$ for each $i\in\zz{\cI_{a}}$. It follows that we know the values $\{\zz{\rho'_{i,t}}\mid t=0,\ldots,r-1 \}$. 
	\zz{Writing \eqref{eq:mds-iua-rack} in matrix form}, we have
	\begin{align}
		\begin{bmatrix}
		1 & \cdots  & 1 \\
		\lambda_{j_1} & \cdots  & \lambda_{j_{r}} \\
		\vdots & \ddots & \vdots \\
		\lambda_{j_1}^{r-1} & \cdots  & \lambda_{j_{r}}^{r-1} 
		\end{bmatrix}
		\begin{bmatrix}
		c_{j_1,i} \\
		c_{j_2,i} \\
		\vdots \\
		c_{j_r,i}
		\end{bmatrix} =
		\begin{bmatrix}
		\zz{\rho'_{i,0}}+\zz{\sigma_{i,0}(\cJ_a)} \\
		\zz{\rho'_{i,1}}+\zz{\sigma_{i,1}(\cJ_a)} \\
		\vdots  \\
		\zz{\rho'_{i,r-1}}+\zz{\sigma_{i,r-1}(\cJ_a)}
		\end{bmatrix}.
	\end{align}
	Therefore, the values $\{c_{j,i} \mid j\in\cJ \}$ can be recovered for every $i\in\cI(\zz{\cJ_a})$ and $\zz{\cJ_a}\subseteq{\cJ}$. It follows that we can recover the values $\{c_{j,i}\mid j\in\cJ \}$ for every $i\in\zz{\cI_{a}}$. Thus, all the values $\{c_{j,i} \mid j\in\cJ,i\in\zz{\cI_{a}},0\le a\le |{\cJ}|\}=\{c_{j,i}\mid j\in\cJ,i\in\{0,\ldots,l-1\}$ can be recovered from the nodes $\{c_j\mid \zz{j\in\cJ^{c}}\}$.
	
	Since $\cJ$ is arbitrary, we conclude that any $n-r$ nodes can recover the entire codeword, i.e., the code is MDS.
	\end{IEEEproof}
	
	\section{A construction of Reed-Solomon codes with optimal repair}\label{sec:RS}
In this section we present a family of scalar MDS codes that support optimal repair of a single node from an arbitrary subset of $\bar d$
helper racks. We still use the same notation as in the previous parts of the paper. As noted earlier, the construction is a modification
of the RS code family in \cite{Tamo18}. The new element of the construction is the idea of coupling the code family of \cite{Tamo18} and
the multiplicative structure that matches the grouping of the nodes into racks. This latter part is similar to the idea of Sec.~\ref{sec:RackCodes}.

Let $q$ be a power of a prime, let $u$ be the size of the rack, and suppose that $u|(q-1).$ Let $k=\bar ku +v, \zz{0}\le v\le u-1,$ $\bar s=\bar d-\bar k+1.$ Let $p_i,i=1,\dots,\bar n$ be distinct primes
such that $p_i\equiv 1\,\text{mod}\, \bar s$ and $p_i>u$ for $i=1,\dots,\bar n$; for instance, we can take the {\em smallest} $\bar n$
primes with these properties. For $i=1,\dots,\bar n$ let $\lambda_i$ be an element of degree $p_i$ over ${\mathbb F}_q$. Let 
  \begin{gather*}
  {F_i}:={\mathbb F_q}(\lambda_j, j\in\{1,\dots,\bar n\}\backslash\{i\}), \;i=1,\dots,\bar n\\
  {\mathbb F}:={\mathbb F_q}(\lambda_1,\dots,\lambda_{\bar n}).
  \end{gather*}
Let ${\mathbb K}$ be an extension of $\mathbb F$ of degree $\bar s$ and let $\mu\in\mathbb{K}$ be a generating element of $\mathbb K$ over $\mathbb{F}$. Thus, for any $i=1,\dots,\bar n$ we have the chain of inclusions
  $$
  {\mathbb F}_q\subset F_i\subset {\mathbb K};
  $$
so $\mathbb{K}$ is the $l$-th degree extension of $\mathbb{F}_q$, where $l=[\mathbb{K} : \mathbb{F}_q] = \bar{s}\prod_{m=1}^{\bar{n}}p_m$. 

Further, let $\lambda\in{\mathbb F}_q$ be an element of multiplicative order $u.$ Consider the set of elements
  $$
  \lambda_{ij}=\lambda_i \lambda^{j-1}, i=1,\dots,\bar n; j=1,\dots,u.
 $$ 
Consider an RS code $\cC=RS_{\mathbb K}(n,k,\Omega)$ where the set of evaluation points $\Omega$ is as follows:
  $$
  \Omega=\bigcup_{i=1}^{\bar n} \Omega_i, \text{ where } \Omega_i=\{\lambda_{ij},j=1,\dots,u\}.
  $$
A codeword of $\cC$ has the form $c=(c_1,c_2,\dots,c_n),$ where the coordinate $c_{m},m=(i-1)u+j,1\le i\le \bar n; 1\le j\le u$ corresponds to the evaluation point $\lambda_{ij}.$

To describe the repair procedure, we will need the following easy modification of Lemma 1 of \cite{Tamo18}.
	\begin{lemma}
		\label{le:rs-subspaces}
		For $i\in \{1,\ldots,\bar{n}\}$, there exists subspace $S_{i}$ of $\mathbb{K}$ such that
		\begin{align}
		\dim_{F_{i}}S_{i}=p_i,\quad  
		 S_{i}+S_i\lambda_i^{u}+\dots+S_i\lambda_{i}^{u(\bar s-1)}
		=\mathbb{K}\label{eq:S}
		\end{align}
where $S_i\beta=\{\gamma \beta, \gamma\in S_i\}$ and the operation $+$ is the Minkowski sum of sets, $T_1+T_2:=\{\gamma_1+\gamma_2: \gamma_1\in T_1,\gamma_2\in T_2\}.$
	\end{lemma}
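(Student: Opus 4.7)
As the text itself advertises, this lemma is an easy modification of \cite[Lemma~1]{Tamo18}, obtained by substituting $\beta:=\lambda_i^u$ for the generator of $\mathbb{F}/F_i$ used there. The strategy is to verify that $\beta$ satisfies the hypothesis needed for the \cite{Tamo18} construction to apply, and then transcribe the construction. Since $\dim_{F_i}\mathbb{K}=\bar s p_i$, the sum $\sum_{k=0}^{\bar s-1}\beta^k S_i$ equals $\mathbb{K}$ precisely when it is direct, so the task is to exhibit an $F_i$-subspace $S_i$ of dimension $p_i$ for which the $\bar s p_i$ translates $\{\beta^k w : 0 \le k < \bar s,\ w \in \text{basis of } S_i\}$ are $F_i$-linearly independent in $\mathbb{K}$.

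The first step is to check that $\beta$ generates $\mathbb{F}$ over $F_i$. Because $\lambda_i$ has degree $p_i$ over $F_i$ (as $\mathbb{F}=F_i(\lambda_i)$ with $[\mathbb{F}:F_i]=p_i$) and $p_i$ is prime, the intermediate field $F_i(\beta)\subseteq\mathbb{F}$ is either $F_i$ itself or all of $\mathbb{F}$. The former would force $\lambda_i^u\in F_i$ and hence $p_i\mid u$, contradicting the hypothesis $p_i>u$; so $F_i(\beta)=\mathbb{F}$ and $\{1,\beta,\ldots,\beta^{p_i-1}\}$ is an $F_i$-basis of $\mathbb{F}$. Combined with the $\mathbb{F}$-basis $\{1,\mu,\ldots,\mu^{\bar s-1}\}$ of $\mathbb{K}$, this gives the $F_i$-basis $\{\beta^a\mu^b : 0\le a < p_i,\ 0\le b < \bar s\}$ of $\mathbb{K}$, and in particular $\{1,\beta,\ldots,\beta^{\bar s-1}\}$ is $F_i$-linearly independent. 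I would then transcribe the construction of $S_i$ from \cite[Lemma~1]{Tamo18}: pick the $p_i$ generators of $S_i$ as a carefully selected mixture of monomials $\beta^a\mu^b$, consisting of $\bar s$ ``seed'' elements (one per $\mu^b$-stratum) together with $p_i-\bar s$ additional mixing elements that couple distinct strata. The verification that the $\bar s p_i$ translates form an $F_i$-basis of $\mathbb{K}$ reduces to a block-structured determinant over $F_i$, whose nonvanishing follows from the linear independence of $\{1,\beta,\ldots,\beta^{p_i-1}\}$ together with the minimal-polynomial relation $\beta^{p_i}=\sum_{j<p_i}f_j\beta^j$ that folds high-exponent shifts back into the $F_i$-basis.

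The main obstacle is the combinatorial design of the mixing elements. A pure monomial choice for $S_i$ fails because the $\bar s$ dilates by $\beta$ shift the $\beta$-exponent by at most $\bar s-1$, whereas each of the $\bar s$ strata $\mathbb{F}\mu^b$ of $\mathbb{K}$ carries $p_i>\bar s$ basis vectors. The resolution---exactly the one given in \cite{Tamo18}---is to couple strata via mixing elements so that shifting by $\beta$ propagates coverage between them, and to rely on the minimal polynomial of $\beta$ to complete the coverage within each stratum. Because the only algebraic property of the generator used in \cite{Tamo18} is that it generates a degree-$p_i$ extension of $F_i$, a property now enjoyed by $\beta$, the argument carries over verbatim to yield the required $S_i$.
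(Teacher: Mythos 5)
Your proposal is correct and takes essentially the same route as the paper: both reduce the lemma to \cite[Lemma 1]{Tamo18}, the only genuinely new ingredient being that $\lambda_i^u$ generates $\mathbb{F}$ over $F_i$ (equivalently $\{1,\lambda_i^u,\dots,\lambda_i^{u(p_i-1)}\}$ is an $F_i$-basis), which you prove and the paper merely asserts before writing the explicit subspace $S_i=\mathrm{Span}_{F_i}(\mu^t\lambda_i^{t+e\bar s};\,0\le t\le\bar s-1,\,0\le e\le\frac{p_i-1}{\bar s}-1)+\mathrm{Span}_{F_i}\bigl(\sum_{t=0}^{\bar s-1}\mu^t\lambda_i^{p_i-1}\bigr)$ and deferring the verification to \cite{Tamo18}. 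Two cosmetic slips worth fixing but not affecting the argument: from $\lambda_i^u\in F_i$ the correct conclusion is $p_i\le u$ (via the minimal polynomial dividing $x^u-\lambda_i^u$), not $p_i\mid u$; and the Tamo--Ye--Barg subspace consists of $p_i-1$ pure monomials plus a \emph{single} element summed over all $\mu$-strata, rather than ``$\bar s$ seeds plus $p_i-\bar s$ mixing elements.''
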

	\begin{proof}
The space $S_i$ is constructed as follows. Define the following vector spaces over $F_i$:
  \begin{gather*}
  S_i^{(1)}=\text{Span}_{F_i}(\mu^t\lambda_i^{t+e\bar s}, t=0,1,\dots,\bar s-1; e=0,1,\dots,\textstyle{\zz{\frac{p_i-1}{\bar s}-1}})\\
  S_i^{(2)}=\text{Span}_{F_i}\Big(\sum_{t=0}^{\bar s-1}\mu^t\lambda_i^{p_i-1}\Big)  
  \end{gather*}
  and take
  $$
   S_i=S_i^{(1)}+S_i^{(2)}.
   $$
Now the proof of \cite[Lemma 1]{Tamo18} can be followed step by step, using the fact that $\{1,\lambda_{\zz{i}}^{u},\ldots,(\lambda_{\zz{i}}^u)^{\zz{p_i}-1}\}$ 
forms a basis for $\mathbb{F}$ over $\zz{F_i}$, and we do not repeat it here. 
	\end{proof}

%
%
%
%
%
%
%
%
%
%
	
	
The main result of this section is given in the following proposition.
\begin{proposition}
The code $\cC$ supports optimal repair of a single failed node in any rack from any $\bar d$ helper racks.
\end{proposition}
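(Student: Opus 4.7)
The plan is to adapt the trace-based Reed--Solomon repair framework of \cite{Tamo18}, exploiting the multiplicative structure $\lambda_{ij}=\lambda_i\lambda^{j-1}$ of the evaluation points. The key observation is that $\lambda_{ij}^u=\lambda_i^u=\mu_i$ for every $j$, so in particular $(\lambda_{m^*j^*})^u=\mu_{m^*}$ is independent of the failed node's intra-rack index $j^*$. This lets the direct-sum decomposition $\mathbb{K}=\bigoplus_{a=0}^{\bar s-1}\mu_{m^*}^a S_{m^*}$ from Lemma~\ref{le:rs-subspaces} apply at the failed evaluation point and reduces recovery of $c_{\alpha^*}$, with $\alpha^*=\lambda_{m^*j^*}$, to computing $l$ carefully chosen $\mathbb{F}_q$-traces of $c_{\alpha^*}$.

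Let $\cR$ be the set of helper racks and $\bar T=\{1,\dots,\bar n\}\setminus(\cR\cup\{m^*\})$ the non-helper non-host racks. For each element $\eta$ of an $\mathbb{F}_q$-basis of $S_{m^*}$ and each $a\in\{0,\dots,\bar s-1\}$, I would use the repair polynomial
\[
p_{\eta,a}(x)\;=\;\eta\,x^{au}\prod_{i\in\bar T}\bigl(x^u-\mu_i\bigr),
\]
whose degree $(a+\bar r-\bar s)u\le(\bar r-1)u\le \bar r u-v-1=r-1$ certifies that $(v_\alpha p_{\eta,a}(\alpha))_\alpha$ is a dual codeword of $\cC$. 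The factor $\prod_{i\in\bar T}(x^u-\mu_i)$ vanishes on every node of every rack in $\bar T$, so those racks drop out of $\sum_\alpha v_\alpha p_{\eta,a}(\alpha)c_\alpha=0$. Moreover $p_{\eta,a}(\lambda_{ij})=\eta\mu_i^a\prod_{i'\in\bar T}(\mu_i-\mu_{i'})=:\eta g_a(\mu_i)$ is constant in $j$, while the Lagrange weights satisfy $v_{\lambda_{ij}}=\lambda^{j-1}/\bigl(u\lambda_i^{u-1}\prod_{i'\ne i}(\mu_i-\mu_{i'})\bigr)$, which I would derive from $\prod_{j=1}^u(x-\lambda_i\lambda^{j-1})=x^u-\mu_i$ together with $(x^u-1)'=ux^{u-1}$.

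Combining these, the rack-$i$ contribution collapses to $\eta g_a(\mu_i)B_i\widehat C_i$, where $B_i$ is a rack-specific scalar and $\widehat C_i=\sum_{j=1}^u\lambda^{j-1}c_{ij}\in\mathbb{K}$ is a single element that rack $i$ computes locally. Using $p_{\eta,a}(\alpha^*)=\eta\mu_{m^*}^a M_0$ with $M_0=\prod_{i\in\bar T}(\mu_{m^*}-\mu_i)\ne 0$, the parity check rearranges to
\[
\eta\mu_{m^*}^a C_0\;=\;-\sum_{i\in\cR}\eta g_a(\mu_i)B_i\widehat C_i\;-\;L_{\eta,a},
\]
where $C_0=v_{\alpha^*}M_0\,c_{\alpha^*}$ and $L_{\eta,a}$ depends only on the $u-1$ surviving local nodes, so is free.

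The final step is to apply $\mathrm{tr}_{\mathbb{K}/\mathbb{F}_q}$ to both sides and count bandwidth. As $\eta$ ranges over an $\mathbb{F}_q$-basis of $S_{m^*}$ and $a$ over $\{0,\dots,\bar s-1\}$, the products $\eta\mu_{m^*}^a$ form an $\mathbb{F}_q$-basis of $\mathbb{K}$ by the direct-sum conclusion of Lemma~\ref{le:rs-subspaces}; hence the $l$ resulting traces determine $C_0$, and so $c_{\alpha^*}$. For the coefficient of $\widehat C_i$: every $\mu_{i'}$ with $i'\ne m^*$ lies in $F_{m^*}$ (covering all of $\bar T\cup\cR$), and $S_{m^*}$ is $F_{m^*}$-stable by Lemma~\ref{le:rs-subspaces}, so $\eta g_a(\mu_i)\in S_{m^*}$. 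Thus the coefficient of $\widehat C_i$ lies in $B_i\cdot S_{m^*}$, an $\mathbb{F}_q$-subspace of $\mathbb{K}$ of dimension $l/\bar s$, and all required traces from rack $i$ are captured by $\{\mathrm{tr}_{\mathbb{K}/\mathbb{F}_q}(\phi_k\widehat C_i):k=1,\dots,l/\bar s\}$ for any $\mathbb{F}_q$-basis $\{\phi_k\}$ of $B_iS_{m^*}$. Summing over the $\bar d$ helper racks yields total inter-rack bandwidth $\bar d\,l/\bar s$, matching the cut-set bound~\eqref{eq:rack-bound}. The main obstacle I expect is the bookkeeping of which subfield each intermediate expression lies in, in particular verifying the $F_{m^*}$-stability of $S_{m^*}$ needed for $\eta g_a(\mu_i)\in S_{m^*}$, and establishing the explicit Lagrange-weight identity that makes the rack-$i$ aggregate depend on rack $i$ only through the single element $\widehat C_i$.
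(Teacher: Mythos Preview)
Your argument is correct and follows the same core strategy as the paper: the repair polynomials $x^{au}h(x)$ (your $\prod_{i\in\bar T}(x^u-\mu_i)$ equals the paper's $h(x)$), the direct-sum decomposition of Lemma~\ref{le:rs-subspaces}, and the degree count $(\bar r-1)u\le r-1$ are identical. The differences are presentational. The paper takes the trace $\mathrm{tr}_{\mathbb{K}/F_{i^\ast}}$ and works with an $F_{i^\ast}$-basis $\{e_m\}$ of $S_{i^\ast}$, sending $p_{i^\ast}$ symbols of $F_{i^\ast}$ per helper rack; you instead take $\mathrm{tr}_{\mathbb{K}/\mathbb{F}_q}$ and use an $\mathbb{F}_q$-basis of $S_{m^\ast}$, sending $l/\bar s$ symbols of $\mathbb{F}_q$. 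These carry the same information (transitivity of trace), so the bandwidth count is unchanged. You also compute the dual multipliers $v_{\lambda_{ij}}$ explicitly and collapse each rack's contribution into a single element $\widehat C_i$, whereas the paper leaves the multipliers $a_{(i-1)u+j}$ abstract and keeps the sum over $j$ inside the downloaded symbol; your route makes the intra-rack aggregation more transparent at the cost of a bit more algebra. One cosmetic point: your symbol $\mu_i:=\lambda_i^u$ collides with the paper's $\mu$ (the generator of $\mathbb{K}$ over $\mathbb{F}$ used inside the definition of $S_i$), so rename it. Also note that $B_i$ involves the factor $\mu_i-\mu_{m^\ast}$ and hence is not in $F_{m^\ast}$; your argument does not need $B_i\in F_{m^\ast}$, only that $B_i\ne 0$ so that $B_iS_{m^\ast}$ is an $\mathbb{F}_q$-subspace of dimension $l/\bar s$, but be careful not to claim $B_iS_{m^\ast}=S_{m^\ast}$.
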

The proof follows the scheme in \cite{Tamo18} which is itself an implementation of the framework for repair of RS codes proposed in \cite{Guruswami16}.
\begin{proof} Let 
  $$
  c=((c_{(i-1)u+j})_{1\le i\le\bar n; 1\le j\le u})
  $$
  be a codeword of $\cC.$ Suppose that
$c_{(i^\ast -1)u+j^\ast}$ is the failed node, i.e., the index of the host rack is $i^\ast, 1\le i^\ast\le n,$ 
and the index of the failed node in this rack is $j^\ast, 1\le j^\ast\le u.$ Denote by 
$\cR \subseteq\{1,\ldots,\bar{n}\}\setminus\{i^\ast\},|\cR|=\bar{d}$ the set of helper racks. The repair relies on the information downloaded
from all the nodes in $\cR$ and the functional nodes in the host rack. Define the annihilator polynomial of the set of
locators of all the nodes in $\cR$:
	 \begin{align}\label{eq:h}
	 	h(x) = \prod_{\substack{i\in\{1,\ldots,\bar{n}\}\setminus(\mathcal{R}\cup\{i^\ast\}),\\1\leq j\leq u}} (x-\lambda_{ij}).
	 \end{align}
	 Let $t=uw$, where $w = 0,\ldots,\bar{s}-1$. Since 
	 \begin{equation}\label{eq:degree}
	 \deg x^th(x)\leq (\bar{s}-1)u+(\bar{n}-\bar{d}-1)u = (\bar{r}-1)u < \bar{r}u-v=n-k
	 \end{equation}
evaluations of the polynomials $x^t h(x)$ are contained in the dual code $\cC^\perp.$ 

Since $\cC^\perp$ itself is a
(generalized) RS code, there is a vector $a=(a_1,\dots,a_n)\in({\mathbb K}^\ast)^n$ such that any codeword of $\cC^\perp$ has the form
$(a_{ij}f(\lambda_{ij}))_{1\le i\le\bar n; 1\le j\le u},$ where $f\in {\mathbb K}[x]$ is a polynomial of degree $\le r-1.$
Thus, by \eqref{eq:degree}, the vector $( a_1\lambda_{11}^t h(\lambda_{11}),\ldots,a_n\lambda_{\bar{n},u}^t h(\lambda_{\bar{n},u}) )\in\cC^{\perp},$
so the inner product of this vector and the codeword $c$ is zero. In other words, we have
	 \begin{align*}
	 	\sum_{j=1}^u a_{(i^\ast-1)u+j}\lambda_{i^\ast j}^t h(\lambda_{i^\ast j}) c_{(i^\ast-1)u+j} 
	 	& = -\sum_{\substack{i=1,\\i\neq i^\ast}}^{\bar{n}}\sum_{j=1}^{u} a_{(i-1)u+j}\lambda_{ij}^t h(\lambda_{ij}) c_{(i-1)u+j}.
	\end{align*}
	Let $S_{i^\ast}$ be the subspace defined in Lemma~\ref{le:rs-subspaces} and let $\{e_1,\ldots,e_{p_{i^\ast}}\}$ be a basis of 
$S_{i^\ast}$ over $F_{i^\ast}.$ Then for $m=1,\ldots,p_{i^\ast}$, we have
	\begin{align*}
		\sum_{j=1}^u 
		e_{m} a_{(i^\ast-1)u+j}\lambda_{i^\ast j}^t h(\lambda_{i^\ast j}) c_{(i^\ast-1)u+j} 
		& = -\sum_{\substack{i=1,\\i\neq i^\ast}}^{\bar{n}}\sum_{j=1}^{u} 
		e_{m} a_{(i-1)u+j}\lambda_{ij}^t h(\lambda_{ij}) c_{(i-1)u+j}.
	\end{align*}	
Evaluating the trace from $\mathbb{K}$ to $F_{i^\ast}$ on both sides of the last equation, we obtain
	\begin{align}
		\tr{i^\ast}
		\Big(\sum_{j=1}^u 
		e_{m} a_{(i^\ast-1)u+j}\lambda_{i^\ast j}^t h(\lambda_{i^\ast j}) &c_{(i^\ast-1)u+j}
		\Big)\nonumber\\
		& = 
		-\sum_{\substack{i=1,\\i\neq i^\ast}}^{\bar{n}}
		\sum_{j=1}^{u}
		\lambda_{ij}^t  h(\lambda_{ij})
		\tr{i^\ast}(e_{m} a_{(i-1)u+j} c_{(i-1)u+j})\nonumber\\
		& = 
		-\sum_{i\in\mathcal{R}}
		\sum_{j=1}^{u}
		\lambda_{ij}^t h(\lambda_{ij})
		\tr{i^\ast}(e_{m} a_{(i-1)u+j}  c_{(i-1)u+j}), \label{eq:tr}
	\end{align}
where we used \eqref{eq:h}, the fact that $\lambda_{ij}\in F_{i^\ast}$ for all $i\ne i^\ast$, and where $t=uw,w=0,\ldots,\bar{s}-1$ and $m=1,\ldots,p_{i^\ast}$.

Recall that $\lambda_{ij}=\lambda_{i}\lambda^{j-1}$ and $\lambda^u=1$. From \eqref{eq:tr} we have
	\begin{align}
		\tr{i^\ast}	\Big(e_{m}\lambda_{i^\ast}^{uw}\sum_{j=1}^u a_{(i^\ast-1)u+j} h(\lambda_{i^\ast j}) c_{(i^\ast-1)u+j}\Big)
		  =	-\sum_{i\in\mathcal{R}}	\lambda_{i}^{uw} \sum_{j=1}^{u} h(\lambda_{ij})	\tr{i^\ast}(e_{m} a_{(i-1)u+j}  c_{(i-1)u+j}),
		  \label{eq:tr-repair}
	\end{align}
	where the parameters $t,m$ are as above.
By \eqref{eq:S} in Lemma~\ref{le:rs-subspaces} and the definition of the set $\{e_m\}$, the set $\{e_{m}\lambda_{i^\ast}^{uw} \mid 1\leq m \leq p_{i^\ast},0\leq w\leq \bar{s}-1  \}$ forms a basis for $
\mathbb{K}$ over $F_{i^\ast}$. 
Therefore, the mapping
   \begin{equation}\label{eq:trace}
   \beta\mapsto \tr{i^\ast}(e_{m}\lambda_{i^\ast}^{uw}\beta), 1\leq m \leq p_{i^\ast},0\leq w\leq \bar{s}-1
   \end{equation}
is a bijection.

The repair procedure is accomplished as follows. For every $i\in \cR,$ the elements 
  \begin{equation}\label{eq:download}
  \sum_{j=1}^{u} h(\lambda_{ij})	\tr{i^\ast}(e_{m} a_{(i-1)u+j}  c_{(i-1)u+j}), m=1,\dots,p_{i^\ast}
  \end{equation}
are downloaded from helper rack $i$. By \eqref{eq:tr-repair} this enables us to find the elements
  $$
  \tr{i^\ast}	\Big(e_{m}\lambda_{i^\ast}^{uw}\sum_{j=1}^u a_{(i^\ast-1)u+j} h(\lambda_{i^\ast j}) c_{(i^\ast-1)u+j}\Big), \quad m=1,\dots,p_{i^\ast}.
  $$
   Next, on account of \eqref{eq:trace} we can find the value of $\sum_{j=1}^u a_{(i^\ast-1)u+j} h(\lambda_{i^\ast j}) 
c_{(i^\ast-1)u+j}.$ Finally, since the values of the coordinates $c_{(i^\ast-1)u+j},j\ne j^\ast$ stored on the functional nodes in the host rack $i^\ast$ are available and the entires of the vector $a$ are nonzero, we
can find $c_{(i^\ast-1)u+j^\ast},$ completing the repair. 

%
	
	 The number of field symbols of $\zz{F_{i^\ast}}$  \eqref{eq:download} transmitted from the helper racks to the host rack equals $\bar{d}\zz{p_{i^\ast}}$. Therefore, we conclude that the repair bandwidth of $\mathcal{C}$ is 
	\begin{align}
		\frac{\bar{d}\zz{p_{i^\ast}}}{[\mathbb{K}: \zz{F_{i^\ast}}]}l & = \frac{\bar{d}l}{\bar{s}}.
	\end{align}
This meets the bound \eqref{eq:rack-bound} with equality, and proves the claim of optimal repair.
\end{proof}
	
\appendices
\addtocontents{toc}{\protect\setcounter{tocdepth}{0}}
\section{Proof of Proposition~\ref{prop:uniform}}\label{sec:uniform}
Let $\cI\subset \cR, |\cI|=\bar k-1$ be a subset of helper racks. Since $\zz{(\bar{d}-\bar k+1)u\ge d-k+1},$ Lemma \ref{lemma:MDS} implies that
  \begin{equation}\label{eq:l}
  \sum_{i\in \cR\backslash\cI}\beta_{i}\ge l.
  \end{equation}
Let us sum the left-hand side on all $\cI\subset\cR, |\cI|=\bar k-1:$
 $$
 \sum_{\begin{substack}{\cI\subset \cR\\|\cI|=\bar k-1}\end{substack}} \sum_{i\in \cR\backslash\cI}\beta_{i}=
 \sum_{i\in\cR}\sum_{\begin{substack}{\cI\subset \cR\\ \cI\not\ni i}\end{substack}}\beta_{i}=\binom{\bar d-1}{\bar k -1}\sum_{i\in \cR}\beta_{i}.
 $$
Together with \eqref{eq:l} we obtain 
$$
\binom{\bar d-1}{\bar k -1}\sum_{i_t\in \cR}\beta_{i}\ge \binom{\bar d}{\bar k-1}
$$
or
$$
\sum_{i\in \cR}\beta_{i}\ge \frac{\bar d l}{\bar d-\bar k+1},
$$
i.e., \eqref{eq:rack-bound}. Moveover, this bound holds with equality if and only if \eqref{eq:l} holds with equality for every $\cI\subset \cR,
|\cI|=\bar k-1.$ Suppose for the sake of contradiction that the uniform download claim does not hold, and there is a rack $i$ such that 
$\beta_{i}\ne l/\bar s,$ for instance, $\beta_i< l/\bar s,$ where $\bar s=\bar d-\bar k+1$. Let $\zz{\cJ\subset \cR},|\cJ|=\bar s,i\in \cJ.$ There must be a rack $i_1\in \cJ$ that 
contributes more than the average number of symbols, i.e., $\beta_{i_1}>l/\bar s.$ Consider the subset $(\cJ\backslash\{i\})\cup \{i_2\},$
where $i_2\ne i$ is another element of $\cR$ (which exists since \zz{$\bar{k}>1$ implies $|\cJ|<|\cR|$}). We have that $\beta_{i_2}<l/\bar s.$
Now take the subset $I=(\cJ\backslash\{i_1\})\cup\{i_2\}$ and note that for it, \eqref{eq:l} fails to hold with equality, a contradiction.

\section{Proof of Prop.~\ref{prop:optimal-access}}\label{app:optimal-access}
\begin{IEEEproof} Let $m'\in \cR$ and let $\cI$ be a subset of $u-v$ nodes in rack $m',$ where $\zz{0}\le v\le u-1.$ Let $\cJ\subseteq \cR\backslash\{m'\},|\cJ|=\bar d-\bar k$ be a subset of helper racks. These racks contain $u(\bar d-\bar k)=d-k+1-(u-v)$ nodes in these racks, and together with the nodes in the set
$\cI$ this forms a group of $d-k+1$ nodes. Using Lemma \ref{lemma:MDS}, we have
  \begin{equation}\label{eq:MDS-access}
  \sum_{m\in \cJ}\sum_{e=1}^u \alpha_{m,e}+\sum_{e\in \cI}\alpha_{m',e}\ge l.
  \end{equation}
Let us average over the $\binom{u}{u-v}$ choices of the set $\cI:$ 
   $$
   \binom u{u-v}\sum_{m\in\cJ}\sum_{e=1}^u \alpha_{m,e}+ \sum_{\cI:|\cI|=u-v}\sum_{e\in\cI}\alpha_{m',e}\ge \binom u{u-v}l.
   $$
Interchanging the sums in the second term on the left, we obtain
  $$
  \binom u{u-v} \sum_{m\in\cJ}\sum_{e=1}^u \alpha_{m,e} +\binom{u-1}{u-v-1}\sum_{e=1}^u\alpha_{m',e}\ge \binom u{u-v}l.
  $$
or
  \begin{equation}\label{eq:u-v}
  \frac u{u-v}\sum_{m\in\cJ}\sum_{e=1}^u\alpha_{m,q}+\sum_{e=1}^u\alpha_{m',e}\ge\frac{u}{u-v}l.
  \end{equation}
Now let us average over the choice of $\cJ\subset \cR\backslash\{m'\}.$ Noting that
  $$
  \sum_{\substack{\cJ\subset\cR\setminus\{m'\}\\|\cJ|=\bar{s}-1}}\sum_{m\in\cJ}\sum_{e=1}^{u}\alpha_{m,e}
  =\zz{\binom{\bar{d}-2}{\bar{s}-2}}
		 	\sum_{m\in\cR\setminus\{m'\}}
		 	\sum_{e=1}^{u}\alpha_{m,e}
			$$
we obtain from \eqref{eq:u-v}	
  $$
  \frac{u}{u-v}\binom{\bar{d}-2}{\bar{s}-2}
		 	\sum_{m\in\cR\setminus\{m'\}}
		 	\sum_{e=1}^{u}\alpha_{m,e}+\binom{\bar d-1}{\bar s-1}\sum_{e=1}^u\alpha_{m',e}\ge\binom{\bar d-1}{\bar s-1}\frac{u}{u-v}l.
  $$
 On account of the assumption that $\bar d\ge 2,\bar s\ge 2$ we find
  \begin{align}\label{eq:avg-2}
		 	\frac{u}{u-v}
		 	\sum_{m\in\mathcal{R}\setminus\{m'\}}
		 	\sum_{e=1}^{u}\alpha_{m,e}+
		 	\frac{\bar{d}-1}{\bar{s}-1}\sum_{e=1}^{u}\alpha_{m',e}
		 	& \geq \frac{\bar{d}-1}{\bar{s}-1}\frac{u}{u-v}l.
		 \end{align} 
Now let us average on the choice of $m'\in\cR:$		
  $$
  \frac{u(\bar{d}-1)}{u-v}
			 \sum_{m\in\mathcal{R}}
			 \sum_{e=1}^{u}\alpha_{m,e}+
			 \frac{\bar{d}-1}{\bar{s}-1}\sum_{m'\in\mathcal{R}}\sum_{e=1}^{u}\alpha_{m',e}
			  \geq \frac{\bar{d}-1}{\bar{s}-1}\frac{u}{u-v}\bar{d}l
  $$
  or 
  $$
  \alpha:=\sum_{m\in \cR}\sum_{e=1}^u \alpha_{m,e}\ge \frac{\bar{d}ul}{u(\bar{s}-1)+u-v}
		 	= \frac{\bar{d}ul}{s}.
  $$
  Equality holds if and only if it holds in \eqref{eq:MDS-access}. This implies the uniform access condition, which is proved in exactly
  the same way as the uniform download condition in Prop.~\ref{prop:uniform}.
 \end{IEEEproof}

\section{Proof of Theorem~\ref{thm:bound} }
	{\em Proof of Part (a):} The proof will be given for the repair of a node in a systematic rack. In the end we will argue that the claimed bound also applies to the repair of nodes in parity racks.

Without loss of generality, assume $\{\bar{k}+1,\ldots,\bar{k}+\bar{s}\}$ to be the $\bar{s}$ parity racks that are involved in the 
repair of the failed node. Let $\bar k+i, {i}=1,\ldots,\bar{s}$ be a helper rack. Since the repair scheme is linear, the information
that this rack provides is obtained through a linear transformation of its contents. Denote the matrix of this transformation by 
 $S_{\bar{k}+{i},m_1}$ and call it the {\em repair matrix} for repairing a failed node in rack $m_1$ from rack $\bar{k}+{i}$ (and call its row space the {\em repair subspace} of the node.
 Note that it is an $\frac{l}{\bar{s}}\times ul$ matrix over $F$; moreover, {for optimal repair, the rank of $S_{\bar{k}+{i},m_1}$ necessarily is $l/\bar{s}$ for all ${i}\in[\bar{s}]$. 
The information that parity rack $\bar{k}+{i}$ transmits to repair the failed node in rack $m_1$ is given by 
	\begin{align}\label{eq:repair}
		S_{\bar{k}+{i},m_1}\bfc_{\bar{k}+{i}} & = S_{\bar{k}+{i},m_1} \Big(c_{k+({i}-1)u+1},\ldots,c_{k+{i}u} \Big)^T\nonumber\\
		& = S_{\bar{k}+{i},m_1} \Big(\sum_{j=1}^{k}A_{({i}-1)u+1,j}c_j,\ldots,\sum_{j=1}^{k}A_{{i}u,j}c_j \Big)^T, i=1,\dots,\bar s.
	\end{align} 
	
For given $i,j$ let us define the block matrix $\cA_{{i},j}=(A_{({i}-1)u+1,j},\ldots,A_{{i}u,j})^T$ (a part of column $j$ in the encoding
matrix that corresponds to rack $m$). Suppose that 
the index of the failed node in rack $m_1$ is $j_1,$ and note that $(m_1-1)u+1\le j_1\le m_1u$. Then from \eqref{eq:repair} we obtain
	\begin{align}
		S_{\bar{k}+{i},m_1}\bfc_{\bar{k}+{i}}
		& = S_{\bar{k}+{i},m_1} \cA_{{i},j_1}c_{j_1}+S_{\bar{k}+{i},m_1}\sum_{\substack{j=1\\ j\neq j_1}}^{k}\cA_{{i},j_1}c_j,\quad i=1,\dots,\bar s.
		\label{eq:parity-offer-node}
	\end{align}
	From \eqref{eq:parity-offer-node} we observe that the information that parity rack $\bar{k}+{i}$ provides for the repair of node $c_{j_1}$ contains interference from the other systematic nodes $c_{j}, j\neq j_1$. Moreover, as rack $m_1$ collects all the information sent from the helper racks $\bar{k}+{i}, 1\le {i}\le\bar{s}$, in order to repair node $c_j$, it is necessary that 
	\begin{align}
		\rk
		\begin{bmatrix}
		S_{\bar{k}+1,m_1}\cA_{1,j_1}\\
		\vdots\\
		S_{\bar{k}+\bar{s},m_1}\cA_{\bar{s},j_1}
		\end{bmatrix}
		& = l.
		\label{eq:ia-1-pre}
	\end{align}
This relation holds true because Equations \eqref{eq:parity-offer-node} evaluate a linear combination of the contents of the nodes in the host rack. To retrieve the $l$ symbols of the failed node from this linear combination, condition \eqref{eq:ia-1-pre} is necessary.

	Let us further define 
the $ul\times ul$ matrix $\cD_{{i},m}=(\cA_{{i},(m-1)u+1},\ldots,\cA_{{i},mu})$ by assembling together $u$ columns of the form $\cA_{\cdot,\cdot},$ i.e., $\cD_{{i},m}=(A_{\alpha,\beta}), (i-1)u+1\le \alpha\le iu, (m-1)u+1\le \beta\le mu.$ These matrices are defined for notational convenience and enable us to argue about entire racks rather than their elements.
Since the code $\cC$ is MDS, the matrix $\cD_{{i},m}$ is invertible for all $1\le {i}\le \bar{r}$ and $1\le m\le\bar{k}$. 
 Rewriting
\eqref{eq:parity-offer-node} with this notation, we obtain
	\begin{align}
		S_{\bar{k}+{i},m_1}\bfc_{\bar{k}+{i}} & = S_{\bar{k}+{i},m_1}\cD_{{i},m_1}\bfc_{m_1} + S_{\bar{k}+{i},m_1}\sum_{\substack{m=1\\ m\neq m_1}}^{\bar{k}}\cD_{{i},m}\bfc_{m}.
		\label{eq:parity-offer-rack}
	\end{align}
	Since $(m_1-1)u+1\le j_1\le m_1u$, from \eqref{eq:ia-1-pre} we have 
	\begin{align}
		\rk
		\begin{bmatrix}
		S_{\bar{k}+1,m_1}\cD_{1,m_1}\\
		\vdots\\
		S_{\bar{k}+\bar{s},m_1}\cD_{\bar{s},m_1}
		\end{bmatrix}
		& = l.
		\label{eq:ia-1}
	\end{align}
	So far we have only considered the information provided by the parity racks. It remains to characterize the information transmitted by the systematic racks. From \eqref{eq:parity-offer-rack}, in order to cancel out the interference from systematic rack $m\neq m_1$, rack $m_1$ needs to download from systematic rack $m$ the vector $(S_{\bar{k}+1,m_1}\cD_{1,m}\bfc_m,\ldots,S_{\bar{k}+\bar{s},m_1}\cD_{\bar{s},m}\bfc_m)^T$.  
	By Proposition~\ref{prop:uniform}, for optimal repair we necessarily have 
	\begin{align}
		\rk\begin{bmatrix}
		S_{\bar{k}+1,m_1}\cD_{1,m}\\
		\vdots\\
		S_{\bar{k}+\bar{s},m_1}\cD_{\bar{s},m}
		\end{bmatrix} & = \frac{l}{\bar{s}}.
		\label{eq:ia-2}
	\end{align}
The rank conditions \eqref{eq:ia-1} and \eqref{eq:ia-2} 
give rise to the following subspace conditions. For any $m_1\in[\bar{k}]$,
	\begin{align}
		\bigoplus_{{i}=1}^{\bar{s}}\langle S_{\bar{k}+{i},m_1}\cD_{{i},m_1}\rangle & = F^l,\label{eq:ia-sp-1}\\
		\langle S_{\bar{k}+1,m_1}\cD_{1,m}\rangle   =\ldots & = \langle S_{\bar{k}+\bar{s},m_1}\cD_{\bar{s},m}\rangle ,\;\;m\neq m_1.\label{eq:ia-sp-2}
	\end{align}
The proof of the lower bounds in the theorem relies on these necessary conditions. Let us first bound the dimension 
of the intersection of the row spaces of the repair matrices.
	
\begin{lemma}\label{lemma:intersection} Let ${\cJ}\subset [\bar k]$ be a subset of systematic nodes such that $|{\cJ}|\le k-1$ and
$m_1\in {\cJ}$. Then for any $i,i'\ge 1$
  \begin{equation}
  \dim \bigcap_{m\in{\cJ}}\langle S_{\bar{k}+{i},m}\rangle  = \dim \bigcap_{m\in{\cJ}}\langle S_{\bar{k}+{i'},m}\rangle.
  \label{eq:same-dim-sp-parity-rack1}
  \end{equation}
\end{lemma}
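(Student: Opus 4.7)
The plan is to upgrade the interference alignment condition \eqref{eq:ia-sp-2} into a single invertible $F$-linear change of coordinates that carries $\langle S_{\bar k + i', m'}\rangle$ onto $\langle S_{\bar k + i, m'}\rangle$ \emph{uniformly} in $m' \in \cJ$. Any such map will carry the intersection over $m' \in \cJ$ onto the corresponding intersection bijectively, which immediately yields \eqref{eq:same-dim-sp-parity-rack1}.

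Concretely, fix a host rack $m' \in \cJ$ and any helper systematic rack $m \in [\bar k]\setminus\{m'\}$. Rewriting \eqref{eq:ia-sp-2} in terms of row spaces gives $\langle S_{\bar k+i, m'}\rangle \cdot \cD_{i, m} = \langle S_{\bar k + i', m'}\rangle \cdot \cD_{i', m}$ for all $i, i' \in \{1,\ldots,\bar s\}$. The MDS property of $\cC$ makes each $\cD_{i,m}$ invertible, so
$$
\langle S_{\bar k + i, m'}\rangle \;=\; \langle S_{\bar k+i', m'}\rangle \cdot T^{(m)}_{i, i'}, \qquad T^{(m)}_{i, i'} \;:=\; \cD_{i', m}\cD_{i,m}^{-1}.
$$
The key observation is that $T^{(m)}_{i, i'}$ depends on the auxiliary index $m$ but not on the host $m'$. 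Since the hypothesis forces $\cJ$ to be a proper subset of $[\bar k]$, we may fix an index $m_0 \in [\bar k]\setminus\cJ$; then $m_0 \ne m'$ for every $m' \in \cJ$, and the identity above holds simultaneously with $m = m_0$ for every $m' \in \cJ$.

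Applying this rack by rack and using that right-multiplication by an invertible matrix $T$ commutes with intersection of subspaces, $\bigl(\bigcap_\alpha U_\alpha\bigr)\cdot T = \bigcap_\alpha (U_\alpha \cdot T)$, we obtain
$$
\bigcap_{m' \in \cJ} \langle S_{\bar k+i, m'}\rangle \;=\; \Big(\bigcap_{m' \in \cJ} \langle S_{\bar k+i', m'}\rangle\Big) \cdot T^{(m_0)}_{i, i'},
$$
and since $T^{(m_0)}_{i,i'}$ is invertible, the two sides have equal $F$-dimension, establishing \eqref{eq:same-dim-sp-parity-rack1}. The main obstacle is isolating a single auxiliary rack $m_0$ that works for all hosts in $\cJ$ simultaneously: \eqref{eq:ia-sp-2} supplies a family of pairwise relations indexed by the auxiliary rack $m$, but only when $m$ lies outside of $\cJ$ does the resulting right-multiplication by $T^{(m)}_{i,i'}$ move every $\langle S_{\bar k + i', m'}\rangle,\,m'\in\cJ,$ in a coherent way; this is exactly what the hypothesis $\cJ\subsetneq[\bar k]$ is there to guarantee.
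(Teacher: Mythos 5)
Your proof is correct and follows essentially the same route as the paper: both arguments pick an auxiliary systematic rack outside $\cJ$, invoke the alignment condition \eqref{eq:ia-sp-2} for that common helper rack, and use invertibility of the $\cD$ blocks to transport the intersection isomorphically. Your transition matrix $T^{(m_0)}_{i,i'}=\cD_{i',m_0}\cD_{i,m_0}^{-1}$ is just a repackaging of the paper's identity $\bigl(\bigcap_{m\in\cJ}\langle S_{\bar k+i,m}\rangle\bigr)\cD_{i,a}=\bigl(\bigcap_{m\in\cJ}\langle S_{\bar k+i',m}\rangle\bigr)\cD_{i',a}$, so the two proofs coincide in substance.
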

\begin{IEEEproof}
Let $a\in[\bar{k}]\setminus{\cJ}$. Since $\cD_{{i},a}$ is nonsingular, for each ${i}\in[\bar{s}]$ we have 
	\begin{align*}
		\bigcap_{m\in{\cJ}} \langle S_{\bar{k}+{i},m}\cD_{{i},a}\rangle & = \Big(\bigcap_{m\in{\cJ}}\langle S_{\bar{k}+{i},m}\rangle\Big)\cD_{{i},a}.
	\end{align*}
On the previous line we take the intersection of the subspaces as indicated, then write a basis of the resulting subspace into rows of a matrix, which has $ul$ columns. Since this is also the number of rows of $\cD_{i,a},$ this operation is well defined.

Since $a\notin{\cJ}$, for any $i,{i'}\in[\bar{s}]$, from \eqref{eq:ia-sp-2} we have 
	\begin{align*}
		\bigcap_{m\in{\cJ}} \langle S_{\bar{k}+{i},m}\cD_{{i},a}\rangle & =
		\bigcap_{m\in{\cJ}} \langle S_{\bar{k}+{i'},m}\cD_{{i'},a}\rangle.
	\end{align*}
	Therefore, for any ${i},{i'}\in[\bar{s}]$
	\begin{align}
		\Big(\bigcap_{m\in{\cJ}}\langle S_{\bar{k}+{i},m}\rangle\Big)\cD_{{i},a} & = \Big(\bigcap_{m\in{\cJ}}\langle S_{\bar{k}+{i'},m}\rangle\Big)\cD_{{i'},a}.
	\end{align}
	Since $\cD_{{i},a}$ and $\cD_{{i'},a}$ are invertible, \eqref{eq:same-dim-sp-parity-rack1} follows.
\end{IEEEproof}		
		
\vspace*{.1in}	Now consider the subspace $\bigcap_{m\in{\cJ}} \langle S_{\bar{k}+{i},m}\cD_{{i},m_1}\rangle,$ where ${\cJ}$ is as in Lemma 
\ref{lemma:intersection}. For any ${i'}\in[\bar{s}]$, we have
	\begin{align}
\bigcap_{m\in{\cJ}} \langle S_{\bar{k}+{i},m}\cD_{{i},m_1}\rangle & = \langle S_{\bar{k}+{i},m_1}\cD_{{i},m_1}\rangle  \bigcap \Big( 
\bigcap_{m\in{\cJ}\setminus\{m_1\}}\langle  S_{\bar{k}+{i},m}\cD_{{i},m_1}\rangle  \Big)\nonumber\\
		& = \langle S_{\bar{k}+{i},m_1}\cD_{{i},m_1}\rangle  \bigcap \Big( \bigcap_{m\in{\cJ}\setminus\{m_1\}} \langle S_{\bar{k}+{i'},m}\cD_{{i'},m_1} \rangle \Big)\nonumber\\
		& \subseteq \bigcap_{m\in{\cJ}\setminus\{m_1\}} \langle S_{\bar{k}+{i'},m}\cD_{{i'},m_1}\rangle .
		\label{eq:sp-chain-parity-rack}
	\end{align}
	Summing on ${i}\in[\bar{s}]$ on both sides of \eqref{eq:sp-chain-parity-rack}, we obtain
	\begin{align}\label{eq:ds}
		\bigoplus_{{i}=1}^{\bar{s}}\Big( \bigcap_{m\in{\cJ}} \langle S_{\bar{k}+{i},m}\cD_{{i},m_1}\rangle \Big) & \subseteq \bigcap_{m\in{\cJ}\setminus\{m_1\}} \langle S_{\bar{k}+{i'},m}\cD_{{i'},m_1}\rangle .
	\end{align}
Note that this is a direct sum because the subspaces $\bigcap_{m\in{\cJ}}\langle S_{\bar{k}+{i},m}\cD_{{i},m_1}\rangle$ form a subset of the set of subspaces on the left-hand side of \eqref{eq:ia-sp-1} and therefore (for different $i$) are disjoint.

	Since $\cD_{{i},m_1}$ and $\cD_{{i'},m_1}$ are invertible, we conclude that
	\begin{align*}
		\sum_{{i}=1}^{\bar{s}}\dim  \bigcap_{m\in{\cJ}} \langle S_{\bar{k}+{i},m}\rangle  & \le \dim  \bigcap_{m\in{\cJ}\setminus\{m_1\}} \langle S_{\bar{k}+{i'},m}\rangle .
	\end{align*}
	Note that from \eqref{eq:same-dim-sp-parity-rack1}, we have
	\begin{align*}
		\sum_{{i}=1}^{\bar{s}}\dim  \bigcap_{m\in{\cJ}} \langle S_{\bar{k}+{i},m}\rangle  & = \bar{s}\dim  \bigcap_{m\in{\cJ}} \langle S_{\bar{k}+{i},m}\rangle .
	\end{align*}
	Therefore,
	\begin{align}
		\dim  \bigcap_{m\in{\cJ}} \langle S_{\bar{k}+{i},m}\rangle & \le \frac{1}{\bar{s}}\dim  \bigcap_{m\in{\cJ}\setminus\{m_1\}}\langle  S_{\bar{k}+{i'},m}\rangle\nonumber\\
		& \le \frac{1}{\bar{s}^{|{\cJ}|-1}}\dim \langle S_{\bar{k}+{i'},m}\rangle\nonumber\\
		& \le \frac{l}{\bar{s}^{|{\cJ}|}}.
		\label{eq:bound-dim-parity}
	\end{align}

If $l\ge \bar{s}^{\bar{k}-1},$ then \eqref{eq:bound1} is proved, so let us assume that $l < \bar{s}^{\bar{k}-1}.$ 
\begin{lemma} Let $\cT\subset [\bar n]\backslash\{\bar k+i\}$ be a subset such that 
\begin{itemize}
\item $1\le |\cT|\le \bar n-1,$
\item $m_1\in\cT,$ 
\item $\cT$ contains $\min(\bar k-1,|\cT|)$ systematic racks.
\end{itemize}  Then 
  \begin{equation}
		\dim  \bigcap_{m\in\cT} \langle S_{\bar{k}+{i},m} \rangle
		 \le \frac{l}{\bar{s}^{|\cT|}}.\label{eq:cT}
	\end{equation}
\end{lemma}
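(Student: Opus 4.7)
The plan is to proceed by induction on $|\cT|$, extending the recursive peeling argument used in the derivation of \eqref{eq:bound-dim-parity}. The base case $|\cT|=1$ is immediate since $\dim\langle S_{\bar k+i,m_1}\rangle\le l/\bar s$. For the inductive step I would split into two subcases. If $|\cT|\le\bar k-1$, then by the third hypothesis $\cT$ consists entirely of systematic racks, so $\cT\subseteq[\bar k]$, and the bound \eqref{eq:bound-dim-parity} applied with $\cJ=\cT$ already yields $\dim\bigcap_{m\in\cT}\langle S_{\bar k+i,m}\rangle\le l/\bar s^{|\cT|}$. If instead $|\cT|\ge\bar k$, then by hypothesis $\cT$ contains exactly $\bar k-1$ systematic racks (including $m_1$) and $|\cT|-(\bar k-1)\ge 1$ parity racks. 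I would pick a parity rack $m_p\in\cT$ and set $\cT'=\cT\setminus\{m_p\}$; one checks that $\cT'$ still satisfies the hypotheses of the lemma (its systematic count is $\bar k-1=\min(\bar k-1,|\cT'|)$ because $|\cT'|\ge\bar k-1$), so the induction hypothesis gives $\dim\bigcap_{m\in\cT'}\langle S_{\bar k+i,m}\rangle\le l/\bar s^{|\cT|-1}$.

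To close the inductive step I would establish the peeling inequality
\[
  \bar s\cdot\dim\bigcap_{m\in\cT}\langle S_{\bar k+i,m}\rangle\;\le\;\dim\bigcap_{m\in\cT'}\langle S_{\bar k+i,m}\rangle,
\]
by replaying the direct-sum argument of \eqref{eq:sp-chain-parity-rack}--\eqref{eq:ds}, but with the pivot taken to be the parity rack $m_p$ instead of a systematic rack. The two ingredients needed are: (i) a direct-sum decomposition $\bigoplus_{h}\langle S_{h,m_p}\widetilde{\cD}_{h,m_p}\rangle=F^l$ taken over the $\bar s$ parity helpers $h$ used when a node in $m_p$ fails (the analogue of \eqref{eq:ia-sp-1}), and (ii) the invariance of $\langle S_{h,m}\widetilde{\cD}_{h,m_p}\rangle$ in $h$ for each fixed $m\in\cT'$ (the analogue of \eqref{eq:ia-sp-2}). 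Both should follow from the hypothesis that $\cC$ admits a linear optimal-repair scheme for every failed node, together with Proposition~\ref{prop:uniform} and a parity-host extension of Lemma~\ref{lemma:intersection}. Combining the peeling inequality with the induction hypothesis immediately yields the desired bound $l/\bar s^{|\cT|}$.

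The main obstacle will be defining the matrices $\widetilde{\cD}_{h,m_p}$ when $m_p$ is a parity rack, since the original $\cD_{i,m}$ in the proof of Theorem~\ref{thm:bound} were defined only for systematic $m$ as sub-blocks of the encoding matrix. One way to resolve this is to use the MDS property of $\cC$ to switch to an equivalent systematic representation in which $m_p$ plays the role of a systematic rack (and one of the original systematic racks becomes parity); in the new representation the needed sub-block appears as part of the encoding matrix, and the rank conditions (i) and (ii) reduce to instances of \eqref{eq:ia-sp-1}--\eqref{eq:ia-sp-2} already established in the systematic setting. Once this algebraic bookkeeping is in place, the remainder of the argument follows the template of \eqref{eq:sp-chain-parity-rack}--\eqref{eq:ds} verbatim.
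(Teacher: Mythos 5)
Your handling of the case $|\cT|\le\bar k-1$ is exactly the paper's: the third hypothesis forces $\cT\subseteq[\bar k]$, and \eqref{eq:bound-dim-parity} with $\cJ=\cT$ gives the bound. The divergence, and the gap, is in the case $|\cT|\ge\bar k$. There the entire content of your argument is the peeling inequality $\bar s\cdot\dim\bigcap_{m\in\cT}\langle S_{\bar k+i,m}\rangle\le\dim\bigcap_{m\in\cT'}\langle S_{\bar k+i,m}\rangle$ with a \emph{parity} rack $m_p$ as pivot, and this is asserted rather than proved. The two ingredients you invoke — a direct-sum decomposition $\bigoplus_h\langle S_{h,m_p}\widetilde{\cD}_{h,m_p}\rangle=F^l$ for a parity host, and invariance of $\langle S_{h,m}\widetilde{\cD}_{h,m_p}\rangle$ in $h$ — are precisely the analogues of \eqref{eq:ia-sp-1}--\eqref{eq:ia-sp-2}, which the paper derives only for systematic hosts and systematic interferers. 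Making them work after re-parameterizing so that $m_p$ is systematic is not "algebraic bookkeeping": the repair matrices $S_{\bar k+i,m}$ are fixed objects tied (in part (a)) to specific helper sets, so one must check that the alignment conditions hold in the new representation for those same matrices and compatible helper sets; moreover when $|\cT|\ge\bar k+1$ the set $\cT'$ itself contains parity racks, so condition (ii) is needed with \emph{both} host and interferer being parity racks of the original representation, which your single swap (making only $m_p$ systematic) does not cover and which would force the induction to carry a strictly stronger hypothesis than the statement of the lemma. As written, the inductive step is therefore not established.

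What you are missing is that no such machinery is needed, because the lemma sits under the standing assumption, made just before it, that $l<\bar s^{\bar k-1}$ (the theorem is already proved otherwise). Under that assumption, for $|\cT|\ge\bar k$ take the set $\cJ\subset\cT$ of its $\bar k-1$ systematic racks (which contains $m_1$); by monotonicity of intersections and \eqref{eq:bound-dim-parity}, $\dim\bigcap_{m\in\cT}\langle S_{\bar k+i,m}\rangle\le\dim\bigcap_{m\in\cJ}\langle S_{\bar k+i,m}\rangle\le l/\bar s^{\bar k-1}<1$, so the intersection is the zero subspace and \eqref{eq:cT} holds trivially. This is the paper's proof, and it is also all that is used later: the only consequence needed is that the subspaces $\langle S_{\bar k+i,m}\rangle$, $m\in\cT$, share a nonzero vector only if $|\cT|\le\log_{\bar s}l$. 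If you want your unconditional peeling statement (without $l<\bar s^{\bar k-1}$), you would have to supply the parity-host interference-alignment conditions in full; that is a genuinely harder claim than the lemma requires.
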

{\em Remark:} Some of the repair matrices in \eqref{eq:cT} refer to the repair scheme of a parity node (a node in a parity rack)
using information from another parity rack. These matrices exist and are well defined because by assumption, the code $\cC$ supports optimal repair
of any node from any set of $\bar d$ helper racks.
\begin{proof}
By the assumption before the lemma, Eq.~\eqref{eq:bound-dim-parity} holds for any ${\cJ}$ of size $\le \bar k-1,$ which proves the claim for the case $|\cT|\le \bar k-1.$ At the same time,
if $|\cT|>\bar k-1,$ take $|{\cJ}|=\bar k-1$ in \eqref{eq:bound-dim-parity} and note that ${\cJ}\subset \cT.$ In this case \eqref{eq:bound-dim-parity} implies 
\eqref{eq:cT} and the proof is complete.
\end{proof}

From \eqref{eq:cT} we observe that the subspaces $\langle S_{\bar{k}+{i},m} \rangle, m\in\cT$ have a vector in common if and only if 
$|\cT|\le \log_{\bar s}l.$ Now consider a $ul\times (\bar n-1)$ matrix $V$ whose rows correspond to the $ul$ vectors in the standard basis of $F^{ul}$ and columns to the repair matrices $S_{\bar k+i,m},m\in[\bar{n}]\setminus\{\bar{k}+{i}\}.$ Put $V_{im}=1$ if the $i$th vector is one of the rows of the $m$th repair matrix and $0$ otherwise. The code has the optimal access property if and only if the rows of the repair matrices are formed of standard basis vectors. Every column of $V$
contains $l/\bar s$ ones, and if $|\cT|\le \log_{\bar s}l,$ then every row contains at most $\log_{\bar s} l$ ones; thus
   $$
\frac{l}{\bar{s}} (\bar{n}-1) \le ul \log_{\bar{s}} l.
  $$
It follows that 
	\begin{align}
		l \geq \bar{s}^{\frac{\bar{n}-1}{s}}, \label{eq:doublecount}
	\end{align}
	where we used $s=\bar{s}u$. This concludes the proof of Part (a).

\vspace*{.05in} {\em Proof of Part (b):}	
We closely follow the arguments in Part (a) with the only difference that the set $\cT$ can now be of size $\bar n,$ which is possible
because the repair matrices are independent of the choice of the helper racks.

Let us outline the argument.  Let $|{\cJ}|=\bar{k}-1$ and $l < \bar{s}^{\bar{k}-1}$.
In this case \eqref{eq:bound-dim-parity} implies that
	\begin{align*}
		\dim \bigcap_{m\in{\cJ}} \langle S_{m}\rangle & =0
	\end{align*}
	(even in the case when the repair scheme is chosen based on the location of the helpers, and all the more so in the current case).
It follows that, for any $\cT\subseteq[\bar{n}]$ such that $\cT\supseteq{\cJ}$, we have
	\begin{align*}
		\dim \bigcap_{m\in\cT} S_{m} & =0.
	\end{align*}
	Therefore, for $l<\bar{s}^{\bar{k}-1}$, we have
	\begin{align}
		\dim  \bigcap_{m\in\cT} S_{m} 
		& \le \frac{l}{\bar{s}^{|\cT|}},
	\end{align}
	for $1\le |\cT|\le \bar{n}$.  For the left-hand side of the above inequality to be greater than $1$, we necessarily have $|\cT| \le \log_{\bar{s}} l$.} 
	
	 Repeating the argument that led to \eqref{eq:doublecount}, we obtain
	\begin{align}
		l \geq \bar{s}^{{\bar{n}}/{s}}.
	\end{align}
	
Thus, we have proved cases (a) and (b) of the theorem for repairing a failed node in a systematic rack. The same bounds hold
for repairing a failed node in any of the parity racks.  Indeed, note that a parity rack $ \bfc_{\bar k+i}, i=1,\dots,\bar r$ 
is computed from the systematic racks as follows:
  \begin{equation}\label{eq:D}
    \bfc_{\bar k+i}=\sum_{j=1}^{\bar k}\cD_{\bar k+i,j}\bfc_{j}
   \end{equation}
If a node in rack $\bar k+i$ has failed, we first choose $\bar d$ helper racks and isolate any $(\bar k-1)$-subset of the chosen $\bar d$-set. Then we write  equations of the form \eqref{eq:D} where on the right we use these $\bar k-1$ racks together with the host rack to express the code symbols in the remaining $\bar r$ racks. These
equations are obtained from \eqref{eq:D} using obvious matrix transformations (no complications arise because the code $\cC$ is MDS). After
that, we can repeat the proofs given above, which establishes our claim.

	\bibliographystyle{IEEEtranS}
	\bibliography{./rfs}

\begin{thebibliography}{10}
\providecommand{\url}[1]{#1}
\csname url@samestyle\endcsname
\providecommand{\newblock}{\relax}
\providecommand{\bibinfo}[2]{#2}
\providecommand{\BIBentrySTDinterwordspacing}{\spaceskip=0pt\relax}
\providecommand{\BIBentryALTinterwordstretchfactor}{4}
\providecommand{\BIBentryALTinterwordspacing}{\spaceskip=\fontdimen2\font plus
\BIBentryALTinterwordstretchfactor\fontdimen3\font minus
  \fontdimen4\font\relax}
\providecommand{\BIBforeignlanguage}[2]{{%
\expandafter\ifx\csname l@#1\endcsname\relax
\typeout{** WARNING: IEEEtranS.bst: No hyphenation pattern has been}%
\typeout{** loaded for the language `#1'. Using the pattern for}%
\typeout{** the default language instead.}%
\else
\language=\csname l@#1\endcsname
\fi
#2}}
\providecommand{\BIBdecl}{\relax}
\BIBdecl

\bibitem{Akhlagi10}
S.~Akhlagi, A.~Kiani, and M.~Ghabavati, ``Cost-bandwidth tradeoff in
  distributed storage systems,'' \emph{Computer Commuunications}, vol.~33,
  no.~17, pp. 2105--2115, 2010.

\bibitem{Balaji18}
S.~Balaji, M.~Krishnan, M.~Vajha, V.~Ramkumar, B.~Sasidharan, and P.~Kumar,
  ``Erasure coding for distributed storage: {A}n overview,'' \emph{Science
  China Information Sciences}, 2018, preprint arXiv:1806.04437, 43pp.

\bibitem{balaji2017tight}
S.~Balaji and P.~V. Kumar, ``A tight lower bound on the sub-packetization level
  of optimal-access {MSR} and {MDS} codes,'' \emph{arXiv preprint
  arXiv:1710.05876}, 2017.

\bibitem{Cadambe13}
V.~R. Cadambe, S.~A. Jafar, H.~Maleki, K.~Ramchandran, and C.~Suh, ``Asymptotic
  interference alignment for optimal repair of {MDS} codes in distributed
  storage,'' \emph{IEEE Trans. Inf. Theory}, vol.~59, no.~5, pp. 2974--2987,
  2013.

\bibitem{dimakis2010network}
A.~G. Dimakis, P.~B. Godfrey, Y.~Wu, M.~J. Wainwright, and K.~Ramchandran,
  ``Network coding for distributed storage systems,'' \emph{IEEE Trans. Inf.
  Theory}, vol.~56, no.~9, pp. 4539--4551, 2010.

\bibitem{Gaston2013}
B.~Gast{\' o}n, J.~Pujol, and M.~Villanueva, ``A realistic distributed storage
  system that minimizes data storage and repair bandwidth,'' in \emph{Proc.
  Data Compression Conf}, 2013, arXiv:1301.1549, 10pp.

\bibitem{Guruswami16}
V.~Guruswami and M.~Wootters, ``Repairing {R}eed-{S}olomon codes,'' \emph{IEEE
  Trans. Inform. Theory}, vol.~63, no.~9, pp. 5684--5698, 2017.

\bibitem{hou2018rack}
H.~Hou, P.~Lee, K.~Shum, and Y.~Hu, ``Rack-aware regenerating codes for data
  centers,'' \emph{arXiv preprint arXiv:1802.04031}, 2018.

\bibitem{hu2016double}
Y.~Hu, P.~P.~C. Lee, and X.~Zhang, ``Double regenerating codes for hierarchical
  data centers,'' in \emph{2016 IEEE International Symposium on Information
  Theory (ISIT)}.\hskip 1em plus 0.5em minus 0.4em\relax IEEE, 2016, pp.
  245--249.

\bibitem{hu2017optimal}
Y.~Hu, X.~Li, M.~Zhang, P.~Lee, X.~Zhang, P.~Zhou, and D.~Feng, ``Optimal
  repair layering for erasure-coded data centers: {F}rom theory to practice,''
  \emph{ACM Transactions on Storage (TOS)}, vol.~13, no.~4, 2017, article \#33.

\bibitem{Kermarrec11}
A.~M. Kermarrec, N.~Le~Scouarnec, and G.~Straub, ``Repairing multiple failures
  with coordinated and adaptive regenerating codes,'' in \emph{2011 Int.
  Sympos. Network Coding (NetCod)}.\hskip 1em plus 0.5em minus 0.4em\relax
  IEEE, 2011, pp. 1--6.

\bibitem{Li14}
J.~Li and B.~Li, ``Cooperative repair with minimum-storage regenerating codes
  for distributed storage,'' in \emph{2014 Proceedings IEEE INFOCOM}.\hskip 1em
  plus 0.5em minus 0.4em\relax IEEE, 2014, pp. 316--324.

\bibitem{LiTangTian18}
J.~Li, X.~Tang, and C.~Tian, ``A generic transformation to enable optimal
  repair in mds codes for distributed storage systems,'' \emph{IEEE Trans. Inf.
  Theory}, vol.~64, no.~9, pp. 6257--6267, 2018.

\bibitem{Pernas2013}
J.~Pernas, C.~Yuen, B.~Gast{\' o}n, and J.~Pujol, ``Non-homogeneous two-rack
  model for distributed storage systems,'' in \emph{Proc. IEEE Int. Symp.
  Information Theory}, 2013, pp. 1237--1241.

\bibitem{PAM18}
N.~Prakash, V.~Abdrashitov, and M.~Medard, ``The storage vs repair bandwidth
  trade-offs for clustered storage systems,'' \emph{IEEE Trans. Inf. Theory},
  vol.~64, no.~8, pp. 5783--5805, 2018.

\bibitem{Rashmi11}
K.~V. Rashmi, N.~B. Shah, and P.~V. Kumar, ``Optimal exact-regenerating codes
  for distributed storage at the {MSR} and {MBR} points via a product-matrix
  construction,'' \emph{IEEE Trans. Inf. Theory}, vol.~57, no.~8, pp.
  5227--5239, 2011.

\bibitem{rawat2016centralized}
A.~S. Rawat, O.~O. Koyluoglu, and S.~Vishwanath, ``Centralized repair of
  multiple node failures with applications to communication efficient secret
  sharing,'' \emph{arXiv preprint arXiv:1603.04822}, 2016.

\bibitem{Sahraei2017a}
S.~Sahraei and M.~Gastpar, ``Increasing availability in distributed storage
  systems via clustering,'' 2017, {P}reprint: arXiv:1710.02653v2.

\bibitem{Sasid16}
B.~Sasidharan, M.~Vajha, and P.~V. Kumar, ``An explicit, coupled-layer
  construction of a high-rate {MSR} code with low sub-packetization level,
  small field size and all-node repair,'' 2016, arXiv:1607.07335.

\bibitem{Shum13}
K.~W. Shum and Y.~Hu, ``Cooperative regenerating codes,'' \emph{IEEE
  Transactions on Information Theory}, vol.~59, no.~11, pp. 7229--7258, 2013.

\bibitem{Sohn18a}
J.-y. Sohn, B.~Choi, and J.~Moon, ``A class of {MSR} codes for clustered
  distributed storage,'' in \emph{Proc. IEEE Intern. Sympos. Inform. Theory,
  June 2018, Vail, CO}, 2018, pp. 2366--2370.

\bibitem{Sohn18}
J.-y. Sohn, B.~Choi, S.~Yoon, and J.~Moon, ``Capacity of clustered distributed
  storage,'' \emph{IEEE Trans. Inf. Theory}, vol.~65, no.~1, pp. 81--107, 2019.

\bibitem{Tamo13}
I.~Tamo, Z.~Wang, and J.~Bruck, ``Zigzag codes: {MDS} array codes with optimal
  rebuilding,'' \emph{IEEE Trans Inf.Theory}, vol.~59, no.~3, pp. 1597--1616,
  2013.

\bibitem{Tamo14}
------, ``Access versus bandwidth in codes for storage,'' \emph{IEEE Trans.
  Inf. Theory}, vol.~60, no.~4, pp. 2028--2037, 2014.

\bibitem{Tamo18}
I.~Tamo, M.~Ye, and A.~Barg, ``The repair problem for {R}eed-{S}olomon codes:
  {O}ptimal repair of single and multiple erasures, asymptotically optimal node
  size,'' \emph{IEEE Trans. Inf. Theory}, 2019, to appear. Preprint available
  at arXiv:1805.01883.

\bibitem{Tebbi2014}
M.~A. Tebbi, T.~H. Chan, and C.~W. Sung, ``A code design framework for
  multi-rack distributed storage,'' in \emph{Proc. IEEE Information Theory
  Workshop (ITW 2014)}, 2014, pp. 55--59.

\bibitem{clay18}
M.~Vajha, V.~Ramkumar, B.~Puranik, G.~Kini, E.~Lobo, B.~Sasidharan, P.~Kumar,
  A.~Barg, M.~Ye, S.~Hussain, S.~Narayanamurthy, and S.~Nandi, ``Clay codes:
  {M}oulding {MDS} codes to yield an {MSR} code,'' in \emph{16th USENIX
  Conference on File and Storage Technologies (FAST 2018), Oakland, CA}, Feb.
  2018, pp. 139--154.

\bibitem{Ye16}
M.~Ye and A.~Barg, ``Explicit constructions of high-rate {MDS} array codes with
  optimal repair bandwidth,'' \emph{IEEE Trans. Inf. Theory}, vol.~63, no.~4,
  pp. 2001--2014, 2017.

\bibitem{Ye19a}
------, ``Cooperative repair: {C}onstructions of optimal {MDS} codes for all
  admissible parameters,'' \emph{IEEE Trans. Inf. Theory}, vol.~65, 2019.

\bibitem{ye2017explicit}
------, ``Explicit constructions of optimal-access {MDS} codes with nearly
  optimal sub-packetization,'' \emph{IEEE Transactions on Information Theory},
  vol.~63, no.~10, pp. 6307--6317, 2017.

\end{thebibliography}
\end{document}